\newtheorem{theorem}{Theorem}[section]
\newtheorem{lemma}[theorem]{Lemma}
\newtheorem{example}{Example}
\newcommand{\ie}{{\em i.e., }}
\newcommand{\eg}{{\em e.g., }}
\begin{document}

\begin{frontmatter}

\title{Active Topology Inference using Network Coding}

\author[uci]{Pegah Sattari}
\ead{psattari@uci.edu}
\author[epfl]{Christina Fragouli}
\ead{christina.fragouli@epfl.ch}
\author[uci]{Athina Markopoulou}
\ead{athina@uci.edu}

\address[uci]{Calit2 Building, Suite 4100, UC Irvine, Irvine, CA 92697-2800, United States}
\address[epfl]{EPFL IC ARNI, Building BC, Station 14, CH - 1015 Lausanne, Switzerland}

\begin{abstract}
Our goal is to infer the topology of a network when (i) we can send probes between sources and receivers at the edge of the network and (ii) intermediate nodes can perform simple network coding operations, \ie additions. Our key intuition is that network coding introduces topology-dependent correlation in the observations at the receivers, which can be exploited to infer the topology. For undirected tree topologies, we design hierarchical clustering algorithms, building on our prior work in \cite{allerton}. For directed acyclic graphs (DAGs), first we decompose the topology into a number of two-source, two-receiver (2-by-2) subnetwork components and then we merge these components to reconstruct the topology. Our approach for DAGs builds on prior work on tomography \cite{journal}, and improves upon it by employing network coding to accurately distinguish among all different 2-by-2 components. We evaluate our algorithms through simulation of a number of realistic topologies and compare them to active tomographic techniques without network coding. We also make connections between our approach and alternatives, including passive inference, {\tt traceroute}, and packet marking.
\end{abstract}

\begin{keyword}
Network Tomography \sep Network Coding \sep Topology Inference
\end{keyword}

\end{frontmatter}

\section{Introduction}

Knowledge of network topology is important for network management,
diagnosis, operation, security and performance optimization. Depending on the context, one may be interested in the topology at different layers, such as the Internet's router-level topology, an overlay network topology, the topology of an ad-hoc wireless network, etc. 
 
There is a large body of prior work on measurements and inference of network topology. One family of techniques assumes the cooperation of nodes in the middle of the network, and uses {\tt traceroute} measurements to collect the ids of nodes along paths and use them to reconstruct the topology. Another family of techniques, referred to as {\em network tomography}, assumes no cooperation from internal nodes and relies on end-to-end probes to infer internal network characteristics, including topology. More specifically, multicast or unicast probes are sent/received between sets of sources/receivers at the edge of the network, and the topology is inferred based on the number and order of received probes. 
 
In this paper, we revisit the problem of topology inference using end-to-end probes, in networks where intermediate nodes are equipped with simple network coding capabilities. We show how to exploit these capabilities in order to perform active topology inference in a more accurate and efficient way than existing tomographic techniques. 

Our key intuition is that network coding introduces topology-dependent correlation in the content of packets observed at the receivers, which can then be exploited to reverse-engineer the topology. For example, a coding point (that combines multiple incoming packets into one or more outgoing packets) introduces correlation between packets coming from different sources, in a similar way that multicast introduces correlation in the packets sent by the same source and observed by several receivers. In fact, the correlation introduced by multicast has been the starting point and the main idea underlying tomographic topology inference. Subsequent schemes made this idea more practical, by emulating multicast with back-to-back unicast probes \cite{castroUnicast,probing}. In contrast, relating probes from different sources to reveal intermediate nodes, also referred to as multiple-source tomography, has been a challenge \cite{bu,probing,journal}. Using simple network coding operations at coding points solves this problem and allows accurate and fast topology inference.

Our approach is general and can be applied to infer the topology in a range of scenarios, including but not limited to wireless multi-hop networks. Wireless multi-hop networks are able to support simple network coding operations  (additions  are sufficient for our schemes), as demonstrated in \cite{cope}, and can therefore benefit from our techniques. Furthermore, there is a good match between some properties and constraints of such networks and our schemes. First, there is natural variability in the delay of wireless links, which (if appropriately used - as explained in later sections) can expedite inference. Second, our schemes keep internal nodes simple (moving processing for inference to dedicated nodes at the edge) and anonymous (revealing the logical topology but not the identities of nodes). Finally, improving the speed of inference may prove important to keep up with changes, {\em e.g.,} due to mobility. 

Our contributions are as follows. First, we consider undirected trees, where leaves can act as sources or receivers of probes, and we design hierarchical clustering algorithms that infer the topology, building on our prior work in \cite{allerton}. Then, we consider directed acyclic graphs (DAGs) with a fixed set of M sources and N receivers and a pre-determined routing scheme. We first decompose the topology into a number of two-source, two-receiver subnetwork components and then we merge these components to reconstruct the topology. Our approach for DAGs builds on prior work on tomography \cite{journal}, and improves upon it by employing simple network coding operations at intermediate nodes to deterministically distinguish  among all possible 2-by-2 subnetwork components, which was impossible without network coding \cite{journal,probing}. We evaluate our algorithms through simulation over a number of topologies  and we show that they can infer the topology accurately and faster than tomographic approaches without network coding. We present our schemes as active probing:  special probes are sent by the sources, specifically for the purpose of inference,  and are treated in special ways by intermediate nodes and eventually received by the receivers and processed at a fusion center. We believe that our  active probing approach with network coding provides one more building block, in the already large space of topology inference techniques, with core strength and ability to identify joining points. We also compare and make connections between our active probing approach and alternatives, such as passive inference, {\tt traceroute}, and packet marking.

The rest of the paper is organized as follows. Section~\ref{sec-related} discusses related work. Section~\ref{sec-statement} presents our assumptions, notation, and problem statement. Section~\ref{sec:MainResults} summarizes the main results of the paper. Section~\ref{sec-trees} presents algorithms for inferring tree topologies. Binary trees are discussed in Section~\ref{sec-binaryTree}, in the absence (Section~\ref{sec-binaryLossless}) or presence (Section
\ref{sec-binaryLossy}) of packet loss. General trees are discussed in Sections~\ref{sec-m-aryTree} and \ref{sec-generalTree}. Section~\ref{sec-extension} presents algorithms for inferring directed acyclic graphs (DAGs). Section~\ref{sec-2by2} presents algorithms for inferring 2-by-2 subnetwork components, in the absence (\ref{sec-2by2Lossless}) or presence (\ref{sec-2by2Lossy}) of packet loss. Section \ref{sec-merging} explains how to merge these components to reconstruct the topology. Section~\ref{sec-simul} provides simulation results for some realistic topologies. Section~\ref{sec-connections} discusses two possible deployment scenarios (one as an active probing scheme and another one using packet marking), and makes connections  between our approach and alternative topology inference approaches. Section~\ref{sec-conclusion} concludes the paper. Appendices A and B analyze the probability of error of our inference algorithms in trees and DAGs, respectively.

\section{Related Work}
\label{sec-related}

One body of related work is network tomography in general, and topology inference in particular. A  good survey of network tomography can be found in \cite{survey}. An early work on topology inference using end-to-end measurements is \cite{sylvia}, where the correlation between end-to-end multicast packet loss patterns was used to infer the topology of binary trees.  The correctness of this idea was rigorously established in \cite{duffield2002}, and was extended to general trees and to measurements other than loss, such as delay variance \cite{topologyDelay}, or more generally any metric that grows monotonically with the number of traversed links. The idea has also been extended to unicast probes \cite{castroUnicast, probing}. In summary, tomographic schemes for topology inference use end-to-end active probing and feed the number, order, or a monotonic property of received probes as input to statistical signal-processing techniques. Inference of link characteristics \cite{minc} can also be combined with topology inference \cite{probing}. In a different context, similar problems have been studied in the context of {\em phylogenetic} trees \cite{phylogenetic}. The work in \cite{anima} uses such algorithms \cite{phylogenetic}, for topology inference in sparse random graphs. 

In addition, inference of congested links has been studied from the angles of compressive sensing \cite{infocom2011, coatesCS1, coatesCS2} and group testing \cite{cheraghchi, gtBook, gtThiran}. The work in \cite{cheraghchi} formulates the problem as a {\em graph-constrained} group testing, where the items correspond to edges, some of them being defective, and the goal is to identify the defective edges given that the test matrix conforms to constraints imposed by the graph, \eg the path connectivities. The work in \cite{infocom2011} recovers sparse vectors, representing certain parameters of the links over the graph, through $l_1$ minimization. It improves the number of required measurements over \cite{cheraghchi}, as  compressive sensing allows real numbers for the link characteristics and measurements instead of true/false binary values in group testing problems.

Most tomographic approaches rely on probes sent from a single source in a tree topology \cite{sylvia, metric-induced, castroUnicast, adaptive, mixture, classicMethod, duffield2002, striped, caceresTopology, topologyDelay, eriksson2010}. Rabbat {\em et al.} \cite{journal, probing, merging} introduced the multiple-source multiple-destination (M-by-N) tomography problem, by sending probes between $M$ sources and $N$ receivers. In \cite{journal,probing}, it was shown that an M-by-N network can be decomposed into a collection of 2-by-2 components. Then, coordinated transmission of multi-packet probes from two sources and packet arrival order measurements at the two receivers were used to infer some information about the 2-by-2 topology. Assuming knowledge of M 1-by-N topologies and all 2-by-2 components, it was also shown how to merge a second source's 1-by-N tree topology with the first one. The resulting M-by-N topology is not exact, but bounds are provided on the locations of joining points with respect to the branching points. This approach also requires a large number of probes, as do all approaches that need to collect enough probes for statistical significance \cite{topologyDelay, adaptive, striped, castroUnicast, radar}. Our work on DAGs builds on and extends the multiple-source multiple-destination work in \cite{journal,probing}, but uses network coding to achieve exact and fast topology inference. 

A second body of related work is from the network coding literature. It is well known that linear network coding makes a network behave as a linear system, whose transfer function depends on the topology. Based on the source packets and the observations at the receivers, one can then try to passively infer the topology. The following papers consider that random linear network coding is employed for the purpose of information transfer, and they perform {\em passive} inference on the side. In \cite{ho}, passive techniques are used to distinguish among failure patterns. In \cite{p2p, subspace, jaggi, jaggi-journal}, subspace properties at various nodes are used for topology inference and error localization. In \cite{jaggi}, each node passively infers its upstream topology at no cost to throughput, but at high complexity.

In contrast, we propose {\em active} probing and a simple coding scheme at intermediate nodes, to achieve low-complexity topology inference at the end nodes. In Section~\ref{sec-connections}, we provide a detailed comparison and make connections between active and passive topology inference. In \cite{globecom, linkloss-journal}, we revisited link-loss (but not topology) tomography using active probing and network coding. In the first part of this paper, we extend our preliminary work in \cite{allerton}, where we showed that active probes from two sources and \texttt{XOR} at intermediate nodes are sufficient to infer binary tree topologies. This approach generalizes to trees, but not to general graphs. In \cite{netCod}, we used a different approach for general graphs, which builds on \cite{journal, probing}: we identify 2-by-2 components and merge them together in an M-by-N topology. This journal paper combines and extends our preliminary work in \cite{allerton,netCod}.

A practical approach for inferring the network topology is based on  {\tt traceroute}  \cite{govindan,cheswick,rocketfuel,yao,clauset,dallasta,skitter,doubletree,paris,DIMES}.
Multiple {\tt traceroute}'s are sent among monitoring hosts, they record node ids along paths,
 and this information is put together to reconstruct the graph. The {\tt traceroute}-based approach is
 discussed in detail in Section~\ref{sec-traceroute}. 
 
Wireless sensor networks and information fusion are considered in  \cite{loss-baochun,lossNC-baochun}. Information is collected at sensor nodes and is forwarded towards a fusion center, following a known reverse tree topology. Information is aggregated \cite{loss-baochun} or network coded \cite{lossNC-baochun} at intermediate nodes, and the loss rates  of links are inferred based on the observations at the fusion center. In contrast, we are interested in inferring the topology of DAGs.

\section{Problem Statement}
\label{sec-statement}

\subsection{Model}

{\bf Assumptions about the Network.} We are interested in inferring {\em static} topologies\footnote{Our algorithms assume that the topology remains static during the inference. However, the topology may change over longer time scales.}. We are also interested in inferring {\em logical} topologies, which are defined by the branching and joining points where the measured end-to-end paths
meet\footnote{Intermediate nodes in a logical topology have degree at least 3, and in-degree and out-degree at least 1. Since it is necessary for identifiability, focusing on logical topologies
is a standard assumption in topology inference problems.}.

In the first part of the paper, we consider undirected trees with $|V|=n$ vertices, $|E|=n-1$ edges that can be used in both directions, and exactly one path between any two vertices. We denote by $\mathcal{L}=\{1,2,...,L\}$ the leaf-vertices of the tree, which correspond to end-hosts that can act as sources or receivers of probe packets.

In the second part, we consider directed acyclic graphs (DAGs) with M sources and N receiver nodes, which we refer to as {\em M-by-N topology}, following the terminology of \cite{journal, probing}. Without loss of generality (W.l.o.g.), we present most of our discussion in terms of $M=2$, \ie inferring a 2-by-N topology; an M-by-N topology can be constructed by merging smaller structures. Similarly to \cite{journal, probing}, we also assume that a predetermined routing policy maps each source-destination pair to a unique route from the source to the destination.\footnote{Our assumption for single path routing is based on the following reasons: (1) single path routing is the reality in most networks today: routers pick the unique next hop towards the destination; (2) this was also the assumption made by Rabbat {\em et al.} \cite{journal, probing}, which is the starting point on which we build in this paper, by adding simple network coding operations; (3) topology inference when multi-path routing is used is an open problem;  the state of the art is [1], which proposes a heuristic approach; and (4) network coding is only used on special probes for the purpose of inference here, and not for throughput, which could be improved by using multi-path routing.} This implies the following three properties, first stated in \cite{journal}.
\begin{itemize}\addtolength{\itemsep}{-.35\baselineskip}
\item[A1] There is a unique
path from each source to each receiver.
\item[A2] Two paths from the
same source to different receivers take the same route until they
branch, so that all 1-by-2 components have the ``inverted Y''
structure; the node where the paths to the two receivers split is
called a {\em branching point}, $B$.
\item[A3] Two paths from different sources
to the same receiver use exactly the same set of links after they
join, so that all 2-by-1 components have the ``Y'' structure; the
node where the paths from the two sources merge is called a {\em
joining point}, $J$.
\end{itemize}
These properties are consistent with destination-based routing: the next hop taken by a packet is determined by a routing table lookup on the destination address. Each subnetwork from one source to $N$ receivers is a 1-by-N tree; the general graph is called a ``multiple-tree'' network \cite{journal}.

{\bf Loss and Delay.}
We consider scenarios with and without packet loss. Each link has a delay with a fixed part, \eg the propagation and transmission delay, and a variable part, \eg the queueing delay. Path delay is the sum of delays across the links in the path. We have no control over the delays of the links but we have control over the timing of operations at sources and intermediate nodes. We can make sources and intermediate nodes operate in time slots of duration $T$ and $W$, respectively, which can be chosen to be quite longer than link delays as explained later.\footnote{This can be
achieved, for example, by assuming a coarse synchronization across source and network nodes (on the order of 5-10ms using NTP), and by making nodes wait for a window before sending or coding/forwarding probe packets respectively.}

{\bf Goal.} Our goal, in this paper, is to design active probing schemes, \ie the operation of sources, intermediate nodes
and receivers, that will allow us to infer the logical topology from the observations at the receivers. We restrict the space of possible operations to the simple options described in the rest of this section. In later sections, we design schemes based on these simple operations and we show that they are sufficient for topology inference. We will revisit the problem statement and make it more precise in the sections for trees and DAGs.

{\bf Operation of Sources.} An experiment consists of a pair of sources $S_1$ and $S_2$ sending, at the same time, a multicast probe packet each ($x_1=[1,0]$ and $x_2=[0,1]$, or more generally symbols from a finite field) to all $N$ receivers. These are special probes sent {\em solely} for the purpose of inference, not for regular data transfer, and treated in a special way, specified next, by intermediate nodes. We perform up to $countMax$ experiments. Consecutive experiments are spaced apart by a large time interval $T$, to ensure that only probes in the same experiment are combined together.

{\bf Operation of Intermediate Nodes.} Intermediate nodes are assumed to support unicast, multicast
and the simplest possible network coding operation, \ie addition over a finite field $\mathbf{F}_q$. They operate in time slots of pre-determined duration or window $W$: a node waits for
$W$ to receive probe packets from its incoming links; if it receives more than one packet, it codes them together and forwards (unicast or multicast) the resulting packet downstream.
The choice of $W$ affects where the packets from the two sources meet. Essentially, an intermediate node can act either as a {\em joining point} (J), in which case it adds all incoming packets and forwards the output to all outgoing links;\footnote{In our schemes, all joining points perform network coding. Therefore, in the rest of the paper, we use the terms {\em joining point}, which comes from the tomography literature \cite{merging,probing,journal}, and {\em coding point}, which comes from the network coding literature, interchangeably.} or as a {\em branching point} (B), in which case it sends (multicasts) the single incoming packet downstream. This operation will be specified more precisely in the sections for trees and DAGs.

{\bf Operation of Receivers.} Each receiver $i$ receives probes $R_i$, which are the source packets  $x_1$,  $x_2$, or a linear combination of $x_1$ and $x_2$, as the result of network coding operations at intermediate nodes. Inference of topology is based only on the observations $R_i$'s. We assume that these observations are sent to a fusion center for central processing and inference; consistently to all tomography literature, the communication of the receivers and the fusion center is out of the scope of this paper.

{\bf Intuition.} Multicast as well as network coding (which is limited to
simple addition in this paper, thus can be thought of as reverse multicast) introduce topology-dependent correlation in the content of packets, which can be used at the receivers to infer the underlying topology. In particular, multicast helps reveal the branching points while network coding helps reveal the joining points.

\subsection{Scope and Discussion}

Possible deployment scenarios are described in Section \ref{sec-deployment}. The first scenario (sending special probes for the sole purpose of topology inference) is used to describe the schemes throughout the paper. Furthermore, and beyond the specific  details of the deployment, we believe that our work provides a fundamental building block for exploiting correlation in the content of network coded packets for inference of joining points. Similarly,  multicast tomography showed how to exploit correlation in multicast packets for inference of branching points; it was then followed by a series of papers that used unicast traffic to ``mimic'' multicast probes and the whole functionality while being more practical.

We would like to emphasize that, in this paper, we apply  network coding on special probes solely for the purpose of topology inference, and {\em not} for improving data transfer (decoding the source messages at the receivers). In data transfer, throughput and delay are indeed important metrics. In our problem, the important metrics are: identifiability (for which, we show that network coding is necessary); and efficiency, \ie the number of probes used and the amount of network resources consumed for a certain level of estimation accuracy  (which we show that it is improved by network coding). Therefore, the delay of the algorithms is not of primary concern in this paper: inferring the topology in the order of seconds as opposed to milliseconds is acceptable in our setup. Multi-path routing, which could increase throughput with network coding, is not considered either.

\section{Main Results}
\label{sec:MainResults}

The main results obtained in this paper are the following:
\begin{itemize}\addtolength{\itemsep}{-.35\baselineskip}
\item For tree networks: 
\begin{itemize}
\item When there is no packet loss in the links, we design the deterministic Alg.~\ref{alg-binaryLossless}, which infers the topology in $O(n)$ iterations, where $n$ is the number of edges in the tree.
\item When there is packet loss in the links, we design Alg.~\ref{alg-binaryLossy}, which infers the topology in $O(nM)$ iterations, where $M\leq \frac{1}{P_{min}}$, and $P_{min}$ is the minimum probability of success across all paths between a source and a destination.
\end{itemize}
\item For DAGs, we decompose the topology into 2-by-2 subnetwork components, and then we merge these components to reconstruct the topology. We design inference algorithms to infer the 2-by-2's, and we design merging algorithms to merge them back to the original topology:
\begin{itemize}
\item When there is no packet loss in the links, we design Alg.~\ref{alg-generalLossless}, which identifies any 2-by-2 topology with probability of error analyzed in Eq.(\ref{eq-error-Alg5}), in $countMax$ experiments. 
\item When there is packet loss in the links, we design Alg.~\ref{alg-generalLossy}, which identifies any 2-by-2 topology with probability of error analyzed in Lemma~\ref{thm-Thm3}, in $countMax$ experiments. 
\item Assuming knowledge of all the 2-by-2's and one source's 1-by-N tree topology, we design Alg.~\ref{alg-merging1} that merges a second source's topology with the first one by identifying {\em all} the joining points in $O(N\log N)$ steps.
\item Assuming knowledge of all the 2-by-2's, but not the 1-by-N tree topology, we can identify all the joining points if and only if there are no branching points in a row. We merge the two topologies in $\binom{N}{2}$ steps, as we describe in Section~\ref{sec-merging2}.
\item We also provide a lower bound on the number of 2-by-2's required by any merging algorithm to uniquely localize all the joining points in a 2-by-N topology, given one source's 1-by-N topology. In Lemma~\ref{theorem-minimum2by2s}, we show that it is $\frac{N}{2}$. 
\end{itemize}
\item We also make connections between our approach and alternative topology inference approaches in Section~\ref{sec-connections}.
\end{itemize}

Note that Alg.~\ref{alg-generalLossless}, Alg.~\ref{alg-generalLossy}, and Alg.~\ref{alg-merging1} build on and extend the corresponding algorithms by Rabbat {\em et al.} \cite{journal, probing, merging}, in the presence of network coding.

\section{Inferring Trees}
\label{sec-trees}

{\bf Overview.} We design algorithms for inferring undirected tree topologies, based only on probes sent between leaf nodes. We follow a hierarchical, top-down approach, by iteratively dividing the tree topology into smaller clusters and revealing how the groups are connected to each other.

{\bf Operation of Sources and Receivers.} In each iteration (timeslot $T>>W$) a set of leaves (different across timeslots) are chosen to act as sources and the remaining leaves act as receivers. Each source sends a distinct packet.  The receiver stores the first packet it receives, and discards any subsequent packets (in the same iteration).

{\bf Operation of Intermediate Nodes.} Every intermediate node operates in intervals of duration $W$. If, within $W$, the node receives a single probe from one of its neighbors, it multicasts the probe to all other neighbors. If, within $W$, it receives more than one packet from different neighbors, it adds them and forwards the result to all remaining neighbors. In binary trees, this linear combination is simply \texttt{XOR}. In general trees, we need operations over higher fields.\footnote{Note that other mappings at the joining point, from $(x_1,x_2)$ to $f(x_1,x_2)$, would achieve the same goal. Linear network coding, $f(x_1,x_2)= x_1+x_2$ , is only one such mapping. Concatenation $(x_1,x_2)$, for example, is another mapping. This approach is, for example, used as data aggregation in \cite{loss-baochun}, where a node waits to receive data from all its children in the reverse multicast tree (or until a specified period of time has elapsed). The node then aggregates all the data and forwards it to the sink via the reverse tree. However, the advantages of using network coding, in particular, compared to these other mapping functions, include simple linear operations and fixed packet size. Indeed, when we have more than two source packets meeting at a joining point, network coding provides an output packet of fixed size at the output, while concatenation provides an output packet with output linear in the number of incoming packets. The same advantage applies when we have the same probe packet meeting multiple times with the other probe at a $J$ (as it may be the case in DAGs), \eg network coding results in $2x_2$. In summary, although it is possible to use other approaches for the same purpose, network coding is the most efficient way to do the task.}

{\bf Summary of Results.} In the rest of this section,  we first consider binary trees, with or without packet loss. Then we extend our algorithms to $m$-ary trees. For trees without loss, we design deterministic algorithms that infer the topology in $O(n)$ iterations. For trees with loss, just one successfully received probe per network path is sufficient, without the need to collect packet loss statistics, a property that enables rapid discovery of the underlying topology.

\subsection{Binary Trees} \label{sec-binaryTree}

\subsubsection{Lossless Binary Tree} \label{sec-binaryLossless}

Let us first consider the simplest case: an undirected binary tree without packet loss. The following example illustrates the main idea.
\begin{example}\label{ex:binarytree}
Consider the tree shown in Fig.~\ref{fig-binaryExample}, with 7 leaves (1,2, ...7) and 5 intermediate nodes (A,B,C,D,E). Assume that nodes $1,7$ act as
sources $S_1,S_2$  and send probes $x_1=[1,0],x_2=[0,1]$, respectively. All other leaves act as receivers. Intermediate node $A$ receives $x_1$
and forwards it to leaf $2$ and to node $C$.
Similarly, node $D$ receives $x_2$ and
forwards it to node $E$ (which in turn forwards it to leaves $5,6$) and to node $C$. Probe packets $x_1$ and $x_2$ arrive at node $C$, which adds them, creates the packet $x_3=x_1\oplus
x_2=[1,1]$, and forwards $x_3$ to node $B$,  which in turn forwards it
to leaves $3,4$.\footnote{We have chosen the directionality of the edges depending on which source reaches the intermediate node first. In this example, we assume that all links have the same delay. For different delays,  $x_1,x_2$ could meet at different nodes, but the algorithm will still work, as discussed after Lemma \ref{thm-Thm1}.}
 
 \begin{figure*}[t!]
\subfigure[Undirected binary tree we want to infer.]{\includegraphics[width=4.4cm, height=4.4cm]{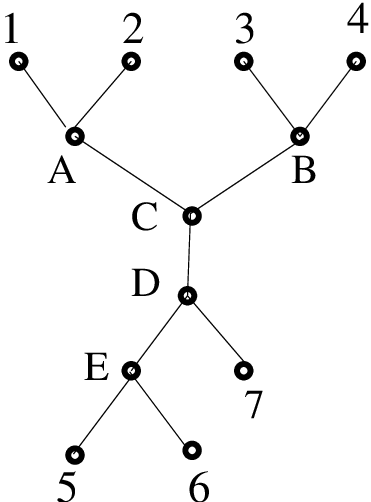}\label{fig-binaryExample}}
\hspace{1cm}
\subfigure[Structure revealed after 1 iteration. Leaves 1 and 7 act as sources. Probes meet at $C$.]{\includegraphics[width=4.4cm, height=4.4cm]{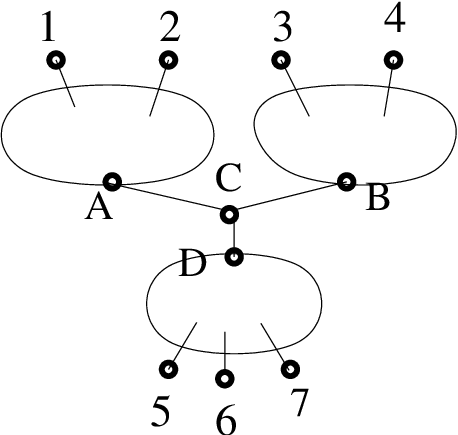}\label{fig-binaryExample2}}
\hspace{1cm}
\centering \subfigure[Structure revealed after two iterations. Leaves 5 and 6 act as sources. Probes meet at  $E$.]{\includegraphics[width=4.4cm, height=4.4cm]{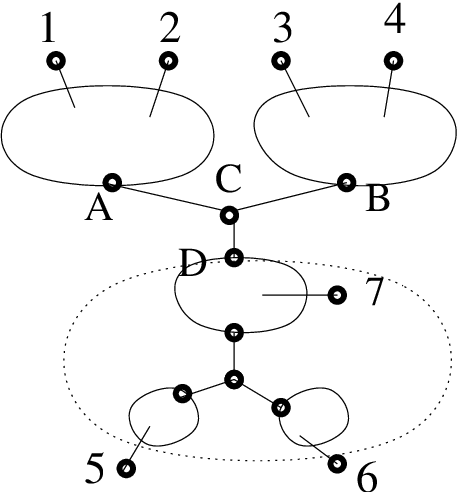}\label{fig-binaryExample3}}
\caption{\label{fig-revealing} Example \ref{ex:binarytree}: inferring the topology of an undirected binary tree  with 7 leaves and 5 intermediate nodes.} 
\end{figure*}

At the end, leaf $2$ receives $x_1$, leaves $5,6$
receive $x_2$ and leaves $3,4$ receive
$x_3=x_1\oplus x_2$. Thus, the leaves of the tree can be partitioned into three sets:
$\mathcal{L}_1$ containing $S_1$ and the leaves that received $x_1$, \ie  $\mathcal{L}_1=\{1,2\}$; $\mathcal{L}_2=\{5,6,7\}$ containing $S_2$ and the leaves
that received $x_2$; and $\mathcal{L}_3=\{3,4\}$
containing the leaves that received $x_1\oplus x_2$.
From this information observed at the edge of the network, we can
deduce that the binary tree has the structure depicted in
Fig.~\ref{fig-binaryExample2}: three components, each seeing a different probe ($x_1, x_2, x_1\oplus x_2$) flowing through it, and connected through three links to the middle node $C$. This concludes the first experiment/iteration.

To infer the structure that connects leaves $\{5,6,7\}$ to node $C$,
we need a second experiment. We randomly choose
two of these leaves, {\em e.g.,} $5,6$, to act as sources $S_1,S_2$. Any probe packet leaving node $D$ will be multicast to all remaining leaves of the tree, {\em
i.e.,} nodes $\{1,2,3,4\}$ observe the same packet. One can think of node $D$ as a single
``aggregate-receiver'' that observes
the common packet received at nodes $\{1,2,3,4\}$. Following the
same procedure as before, assuming that $x_1,x_2$ meet
at node $E$, nodes $7$ and $\{1,2,3,4\}$ receive $x_3=x_1\oplus x_2$. Using this additional information and the fact that the topology
is a binary tree, we refine the inferred structure from Fig.~\ref{fig-binaryExample2} to Fig.~\ref{fig-binaryExample3}. \end{example}

Algorithm~\ref{alg-binaryLossless} generalizes the previous example and can infer any binary tree topology. It starts by considering all the leaves $\mathcal{L}$. It calls {\tt SendTwoProbes} and partitions $\mathcal{L}$ into smaller sets $\mathcal{L}_1$, $\mathcal{L}_2$, $\mathcal{L}_3$. It proceeds by recursively calling {\tt SendTwoProbes} within each set, until all edges are revealed.

\begin{algorithm}[t!]
\begin{footnotesize} 
\caption{\label{alg-binaryLossless} Topology Inference for Lossless Binary Tree}
\begin{algorithmic}[1]
\STATE $E=\emptyset$; /*Initially, we can only observe the leaves ($\mathcal{L}$); our goal is to reveal all the edges in the tree, \ie set $E$.*/
\STATE {\bf {\tt InferBinaryTree($\mathcal{L}$):}}
\STATE ($\mathcal{L}_1, \mathcal{L}_2, \mathcal{L}_3, A_1, A_2, A_3$)={\tt SendTwoProbes}($\mathcal{L}$);    
\FOR{$i\in\{1,2,3\}$}
\IF{$|\mathcal{L}_i|==0$}
\STATE Continue;
\ELSIF{$|\mathcal{L}_i|==1 ~||~ |\mathcal{L}_i|==2$} 
\FOR{$v\in \mathcal{L}_i$} 
\STATE $E=E\cup \{(v,A_i)\}$; /*Connect the leaf nodes $v$ in $\mathcal{L}_i$ through node $A_i$ to the rest of the network.*/
\ENDFOR
\ELSE 
\STATE /*$|\mathcal{L}_i|>2$, \ie $\mathcal{L}_i$ contains three or more leaves.*/
\STATE {\bf {\tt InferBinaryTree($\mathcal{L}_i \cup A_i$)}}; /*Node $A_i$ that connects $\mathcal{L}_i$ to the network acts as an aggregate receiver\footnotemark.*/
\ENDIF
\ENDFOR
\RETURN
\STATE Replace vertices of degree two with a single node.
\vspace{1.0em}
\STATE {\bf {\tt SendTwoProbes}($\mathcal{L}$):}
\STATE Randomly choose two leaves in $\mathcal{L}$ to act as sources $S_1$, $S_2$ and
send probe packets $x_1$, $x_2$ respectively. All other leaves $\mathcal{L}-\{S_1,S_2\}$ act as receivers. Intermediate nodes act as branching or joining points. 
\STATE When all receivers receive a probe, partition $\mathcal{L}$ into
$\mathcal{L}_1,\mathcal{L}_2,\mathcal{L}_3$ as follows. 
\STATE Set $\mathcal{L}_1$ contains $S_1$ and all receivers that observe $x_1$. Set $\mathcal{L}_2$ contains $S_2$ and all receivers that observe $x_2$. Set $\mathcal{L}_3$ contains all receivers that observe $x_3=x_1\oplus x_2$.
\IF{$|\mathcal{L}_3|\neq 0$}
\STATE Create new nodes $A_1,A_2,A_3,A_*$.
\STATE $E=E\cup \{(A_1,A_*),(A_2,A_*),(A_3,A_*)\}$; 
\STATE /*This case is depicted in Fig.~\ref{fig-dividing}(a): $\mathcal{L}$ is divided into three components $\mathcal{L}_1$, $\mathcal{L}_2$, $\mathcal{L}_3$, connected through three edges to nodes $A_1,A_2,A_3$.*/
\ELSE 
\STATE /*$\mathcal{L}_3=\emptyset$*/
\STATE Create new nodes $A_1,A_2$. Also, $A_3=null$.   
\STATE $E=E\cup\{(A_1,A_2)\}$; 
\STATE /*This case is depicted in Fig.~\ref{fig-dividing}(b): $\mathcal{L}$ is divided into 2 components $\mathcal{L}_1,\mathcal{L}_2$, connected through a single edge.*/
\ENDIF 
\RETURN the components $\mathcal{L}_1, \mathcal{L}_2, \mathcal{L}_3$ and the nodes $A_1, A_2, A_3$. 
\end{algorithmic}
\end{footnotesize}
\end{algorithm}

\begin{figure}[t]
\centering 
\subfigure[Dividing $\mathcal{L}$ into 3 components.]  
{{\includegraphics[height=4.2cm,width=4.7cm]{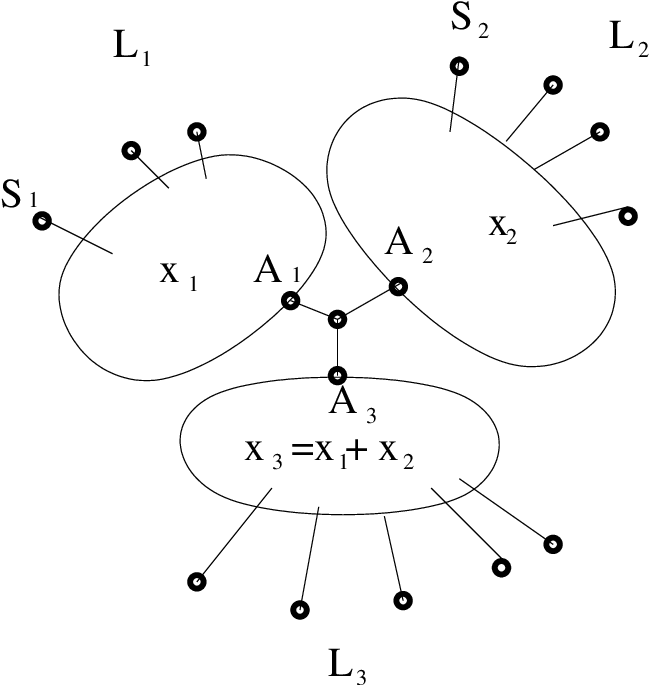}} \label{fig-case3}} \hspace{0.5cm} 
\subfigure[Dividing $\mathcal{L}$ into 2 components]
{{\includegraphics[height=2.4cm,width=4.6cm]{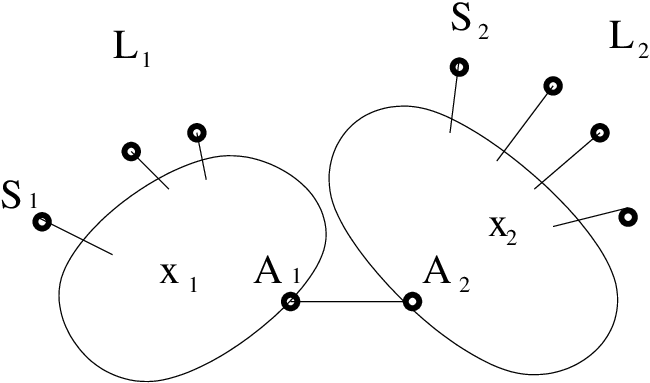}}
\label{fig-case21}}
\caption{Edges and vertices of the graph, as revealed by a single iteration (call of {\tt SendTwoProbes}) in Algorithm \ref{alg-binaryLossless}. The leaves $\mathcal{L}$ are partitioned into two or three groups, based on their observations, $x_1, x_2, x_1\oplus x_2$.} 
\label{fig-dividing}
\end{figure}

\begin{lemma}
\label{thm-Thm1} Algorithm~\ref{alg-binaryLossless} terminates in at most $n$ iterations and exactly infers the topology of an undirected binary tree.
\end{lemma}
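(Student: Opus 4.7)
The plan is to prove both parts jointly by strong induction on $|\mathcal{L}|$, the number of leaves of the (sub)component currently processed. The base cases $|\mathcal{L}|\le 2$ are immediate: the algorithm attaches the one or two leaves directly to the aggregate node $A_i$, which is the only possible local structure in a binary tree. For the inductive step with $|\mathcal{L}|\ge 3$, I would show that a single call to \texttt{SendTwoProbes} returns a partition of $\mathcal{L}$ that matches the true local structure, so that the inductive hypothesis applies to each strictly smaller subcomponent.

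The crux is analysing how $x_1$ and $x_2$ propagate in an undirected binary tree. The tree has a unique $S_1$-$S_2$ path; by the multicast rule, $x_1$ floods every subtree hanging off the $S_1$-side of this path and $x_2$ floods every subtree on the $S_2$-side. Two cases then arise depending on timing and the window $W$. In the first case, both probes synchronize at a unique internal node $C$ on the path; $C$ has degree three in a binary tree, and its single ``off-path'' subtree receives $x_3=x_1\oplus x_2$. The partition $(\mathcal{L}_1,\mathcal{L}_2,\mathcal{L}_3)$ therefore coincides exactly with the three subtrees of $C$, justifying Fig.~\ref{fig-dividing}(a). In the second case, the probes cross on a single edge $e$ and no intermediate node ever holds both within one window; then $\mathcal{L}_3=\emptyset$, and $\mathcal{L}_1,\mathcal{L}_2$ are the leaves on the two sides of $e$, matching Fig.~\ref{fig-dividing}(b).

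For the recursive step I would verify that $A_i$ acts as a faithful ``aggregate receiver'': any probe that reaches $A_i$ in the sub-experiment is multicast onward and is observed identically by every leaf of $\mathcal{L}\setminus\mathcal{L}_i$, so reading off such a leaf is equivalent to observing $A_i$ directly. This reduces the subproblem to a well-posed inference task on a binary sub-tree with strictly fewer leaves (namely $|\mathcal{L}_i|+1$), and the induction closes; the final degree-two suppression step merely collapses artifacts that arise when two separate calls re-discover opposite sides of the same edge. For the termination bound, the key observation is that every call to \texttt{SendTwoProbes} reveals at least one previously-unidentified edge of the original tree (the three edges incident to $C$ in case one, or the single edge $e$ in case two); since a tree on $n$ vertices has only $n-1$ edges, the number of iterations is at most $n-1<n$.

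The step I expect to be most delicate is the case split on where $x_1$ and $x_2$ meet: arguing that exactly one of the two cases above occurs, and that the resulting leaf observations uniquely determine subtree membership, requires a careful argument about the window-$W$ synchronization semantics and the hop-by-hop flooding rule at intermediate nodes (in particular ruling out spurious intermediate outcomes such as probes meeting at a node that then partially re-multicasts). Once that case analysis is cleanly formulated, the inductive merging of components and the edge-counting termination bound are routine.
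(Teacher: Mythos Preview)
Your proposal is correct and follows essentially the same approach as the paper: the core argument is the two-case analysis along the $S_1$--$S_2$ path (probes meet at a node and reveal three edges, or cross on an edge and reveal one), together with the edge-counting bound of at most $n-1$ iterations. You have wrapped this in a cleaner inductive framework and made the role of the aggregate receiver and the correctness of the partition more explicit than the paper does, but the underlying ideas are identical.
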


\begin{proof} 
Consider a particular iteration (call of {\tt SendTwoProbes}): sources $S_1$ and $S_2$ send exactly one probe packet each to all other leaves. Now consider the intermediate nodes on the path $\mathcal{P}$ between
the two sources. Depending on the link delays, there are two possibilities.

The first possibility is that $x_1$ and $x_2$  meet (arrive within the
same $W$) at one of the intermediate nodes on $\mathcal{P}$, {\em e.g.,} node
$A$. Node $A$ forwards their \texttt{XOR} to its third link, and the iteration reveals the neighboring edges and nodes to
$A$ as depicted in Fig.~\ref{fig-dividing}(a).
Another possibility is that $x_1$ and
$x_2$ cross each other while traversing the same link of
$\mathcal{P}$ in opposite directions, \ie they do not meet
at a node. Even if a leaf node receives more than one
probe, we design their operation so that they only keep the first one. In this
case, we infer the configuration in Fig.~\ref{fig-dividing}(b) that reveals one edge.

In summary, the algorithm iteratively divides the binary tree into smaller components until one component has two or less leaves, in which case we know its structure. In each iteration, we reveal three edges or one edge. At the end, we have revealed all $n-1$ edges. Therefore, the algorithm requires between $\frac{n-1}{3}$ and $n-1$ iterations. 
\end{proof}

\footnotetext{Although we cannot directly observe $A_i$, whatever is received by $A_i$ will be received by the leaves that are in $\mathcal{L}$ but not in $\mathcal{L}_i$; thus acting as an ``aggregate'' receiver on their behalf.}

{\em Notes.} In each iteration, every link is traversed exactly once  by a probe.
Link delays affect where the probes meet and thus what components are revealed in each iteration. However, they do
 not affect the correctness of the algorithm.

\subsubsection{Lossy Binary Tree} \label{sec-binaryLossy}

Packet loss may cause confusion when dividing the receivers into components. One solution is to send multiple probes from the same two sources in each iteration as we discuss next. However, given packet loss and delay variability, this may result in probes meeting at different nodes in the same iteration\footnote{This was not a problem in the lossless case. In a given iteration, since only one probe packet is generated by each source, the packets at most meet at one intermediate node.}. This problem exists because we deal with undirected graphs, where a link may be traversed in opposite directions by probes sent in the same iteration. It can be avoided by fixing the directionality of edges in each iteration. This can be achieved in a distributed way by the first packet arriving at each intermediate node. 
We modify the intermediate node operations as follows.

{\bf Intermediate Node Operation:} Each intermediate node keeps a
table of its neighbors. In each iteration, it marks these
neighbors as source or sink
neighbors. Once this marking is done, it does not change
for the duration of the iteration. The first time during an iteration that an
intermediate node receives a probe,
it waits for a window $W$ to receive probes
from other neighbors. After this time $W$ passes,
the node marks all neighbors from which it received packets as
sources and all other neighbors as sinks. For the remaining duration
of the iteration, the node accepts packets only if they originate
from its source neighbors. If the node receives a packet from one of its source neighbors, it forwards it to all its sink neighbors. If it receives more than one packet from different source neighbors, it linearly combines them, and forwards
the result to its sink neighbors. The node rejects probes coming
from sinks, and does not forward packets towards sources.

\begin{algorithm}[t!]
\begin{footnotesize} 
\caption{\label{alg-binaryLossy} Topology Inference for Lossy Binary Tree. We only describe the {\bf {\tt SendTwoProbes}} function below since the first part (which contained the {\bf {\tt InferBinaryTree}} procedure) is similar to Algorithm~\ref{alg-binaryLossless}. {\bf {\tt SendTwoProbes}} is different from Algorithm~\ref{alg-binaryLossless} in that: (i) each source sends $M$, instead of one, probe packets; and (ii) the rules to divide $\mathcal{L}$ into three components change as follows.}
\begin{algorithmic}[1]
\STATE {\bf {\tt SendTwoProbes}($\mathcal{L}$):}
\STATE Randomly choose two leaves in $\mathcal{L}$ to act as $S_1$, $S_2$. The sources transmit, for $M$ times, probe packets $x_1$, $x_2$, respectively. All other leaves $\mathcal{L}-\{S_1,S_2\}$ act as receivers. Intermediate nodes act as branching or joining points. 
\STATE /*Partition $\mathcal{L}$ into $\mathcal{L}_1,\mathcal{L}_2,\mathcal{L}_3$ as follows.*/
\FOR{each receiver $j$}
\STATE Let $O_j$ be the set of  all observations of receiver $j$. /*We consider the union of observations for each receiver.*/
\STATE /*For aggregate receiver $A_i$, we apply the same rule using the union of the
aggregate receiver observations.*/
\IF{$O_j$ contains only $x_1$} 
\STATE Assign receiver $j$ to the set $\mathcal{L}_1$.
\ELSIF{$O_j$ contains only $x_2$} 
\STATE Assign receiver $j$ to the set $\mathcal{L}_2$.
\ELSIF{$O_j$ contains both $x_1$ and $x_2$, or it contains an $x_1\oplus x_2$ packet} 
\STATE Assign receiver $j$ to the set $\mathcal{L}_3$.
\ELSE
\STATE /*$O_j=\emptyset$*/
\STATE Randomly assign receiver $j$ to one of the components.
\ENDIF
\ENDFOR
\IF{$|\mathcal{L}_3|\neq 0$}
\STATE Create new nodes $A_1,A_2,A_3,A_*$.
\STATE $E=E\cup \{(A_1,A_*),(A_2,A_*),(A_3,A_*)\}$; 
\STATE /*This case is depicted in Fig.~\ref{fig-dividing}(a): $\mathcal{L}$ is divided into three components $\mathcal{L}_1$, $\mathcal{L}_2$, $\mathcal{L}_3$, connected through three edges to nodes $A_1,A_2,A_3$.*/
\ELSE 
\STATE /*$\mathcal{L}_3=\emptyset$*/
\STATE Create new nodes $A_1,A_2$. Also, $A_3=null$.   
\STATE $E=E\cup\{(A_1,A_2)\}$; 
\STATE /*This case is depicted in Fig.~\ref{fig-dividing}(b): $\mathcal{L}$ is divided into 2 components $\mathcal{L}_1,\mathcal{L}_2$, connected through a single edge.*/
\ENDIF 
\RETURN the components $\mathcal{L}_1, \mathcal{L}_2, \mathcal{L}_3$ and the nodes $A_1, A_2, A_3$.
\end{algorithmic}
\end{footnotesize}
\end{algorithm}

Alg.~\ref{alg-binaryLossy} presents the modifications required for Alg.~\ref{alg-binaryLossless} to be able to infer binary trees with lossy links. The main difference is that in each iteration, each source sends $M$ instead of one probes.

{\em Performance:} Alg.~\ref{alg-binaryLossy} has an associated probability of error, since a leaf might not receive the ``correct'' probe packet\footnote{In a given iteration, an error may occur either because a leaf does not receive any packet (which can be made arbitrarily small by increasing the number of probes $M$) or, because it belongs to $\mathcal{L}_3$ but happens to receive only $x_1$ or only $x_2$ packets. This probability decreases very fast as $M$ increases, as observed in the simulations of Section \ref{sec-simul}.}. For our algorithm to operate correctly, it suffices that each receiver receives  {\em at least one} packet from each of the sources it is connected to. Nodes in $\mathcal{L}_1$ or $\mathcal{L}_2$ are connected to one source ($S_1$ or $S_2$), while nodes in $\mathcal{L}_3$ to two sources. In general, the number of probes $M$ required per iteration in order to have one ``success'' is a random variable that depends on the topology. For general trees, the distribution of this random variable is difficult to characterize precisely, but upper bounds can be provided. In particular, we need every path (from each source to each receiver) to work at least once. Different paths have different probability of success depending on their length ($l_P$) and the probability of loss on every link across the path ($p_i$ for link $i$): $P=(1-p_1)\cdots(1-p_{l_P})$. Let $P_{min}$ be the minimum probability of success across all paths. Then $M$ is a geometric random variable with success probability $P_{min}$. Therefore, $E[M]=1/{P_{min}}$, $var[M]=(1-P_{min})/P^2_{min}$, and $Pr(|M-E[M]|\geq m)\leq var[M]/m^2$. An example computation for the exact probability of error in a specific topology (tree of Fig.~\ref{fig-binaryExample}) is provided in Appendix A. Note that in general, $M$ is much smaller compared to the methods that collect a statistically significant number of packets and perform estimation.

%

\subsection{M-ary Trees}

\subsubsection{Full M-ary Trees} \label{sec-m-aryTree}

We first consider full $m$-ary trees, where all intermediate nodes have degree $m+1$, $m \ge 3$,
without packet loss.  Alg.~\ref{alg-binaryLossless} can still
accurately infer the topology in less than $n$ iterations. However, we can modify the algorithm to infer the topology even faster. The idea is to keep the hierarchical clustering approach but increase the number of components revealed in each iteration, either (i) by
changing the intermediate nodes so that
they forward different linear combinations of incoming probes
to different outgoing links; or (ii) by increasing the number of
sources  in each iteration.

{\bf Modification I:} (two sources per iteration, coding points send different combinations to different links).
When an intermediate node receives two incoming packets from two different neighbors, it deterministically generates different linear combinations,
{\em e.g.,} $x_1+x_2,x_1+2x_2,\cdots$ and forwards each resulting
packet to a different neighbor.
Therefore, when $x_1$ and $x_2$ meet at any intermediate node, the leaves of the network
will be divided into $m+1$ components, depending on which probe packet they receive. If the probe packets do not meet at a node but cross each other, the leaves of the network will be divided into two
components. Once a component has $m$ or less leaves, since we have a full $m$-ary tree, we know its structure. Therefore, in each iteration, we reveal $m+1$ edges or one edge, and the total number of iterations is reduced to at least $\frac{n-1}{m+1}$ and at most $n-1$ iterations.
Note that the operations are performed over $F_{m^2}$ in this case.

{\bf Modification II:} (more than two sources per iteration, coding points send the same combination to all outgoing links). Alternatively, we can use up to $m$ sources (as per Lemma \ref{thm-Thm2}) per iteration. The sources send $x_1=[1,0,0,\cdots,0],x_2=[0,1,0,\cdots,0],\cdots,x_m=[0,0,0,\cdots,1]$, respectively. When an intermediate node receives $k$ packets from
different neighbors within $W$, $2\leq k\leq m$, it simply
adds them up (over $F_{2^m}$) and forwards the result to all
remaining neighbors. Depending on whether the
node receives $k$ packets or only a single packet, the
leaves of the network will be divided into $m+1$ or $m$ more
components; \ie in each iteration, we reveal $m+1$ or $m$ edges. Therefore, the algorithm requires at least $\frac{n-1}{m+1}$ and at most $\frac{n-1}{m}$ iterations.

\begin{lemma}
\label{thm-Thm2} The maximum number of sources that can be used to uniquely infer the topology of a full $m$-ary tree is $m$.
\end{lemma}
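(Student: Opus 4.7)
The plan is to prove the bound by two matching arguments, both hinging on the fact that every intermediate node of a full $m$-ary tree has degree exactly $m+1$. The first direction shows that $m$ sources suffice under Modification~II; the second shows that using more than $m$ sources breaks the scheme.

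For the achievability direction, I would simply invoke Modification~II. When $m$ leaves are designated as sources and send the basis vectors $e_1,\ldots,e_m \in F_{2^m}$, at any intermediate node $v$ the paths from the $m$ sources enter through at most $m$ of $v$'s $m+1$ neighbors, so at least one non-source neighbor of $v$ remains. The rule ``sum the incoming probes and forward the result to the single remaining non-source neighbor'' is therefore well-defined at every coding point, and an inductive argument analogous to the proof of Lemma~\ref{thm-Thm1} shows that each iteration partitions the currently unresolved leaf set into either $m+1$ components (probes meet at an interior node) or $m$ components (probes cross on a link), so that all edges are revealed in finitely many iterations.

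For the converse, I would exhibit a full $m$-ary tree on which the scheme cannot even be run with $m+1$ sources. Place the $m+1$ chosen sources so that each of the $m+1$ subtrees hanging off some intermediate node $v$ contains exactly one source; this is always possible provided each subtree of $v$ contains at least one leaf, which holds in any sufficiently large full $m$-ary tree. All $m+1$ neighbors of $v$ then become source neighbors, and the Modification~II rule that would direct $v$ to forward $x_1+\cdots+x_{m+1}$ onto its unique non-source neighbor has no valid target: the forwarding rule is undefined and the combined packet is lost. From the receivers' point of view, no observation then depends jointly on all $m+1$ subtrees of $v$, so labeled trees obtained by permuting the source assignments across those subtrees induce the same receiver observations, destroying identifiability.

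The hard part is making the indistinguishability claim in the converse fully rigorous: with undirected edges and several simultaneous probes, the exact packet a receiver stores depends on the window $W$ and the link delays. A cleaner route, which I would take, is to stop at the structural observation that the Modification~II rule is simply \emph{undefined} at a coding point with $m+1$ source neighbors, so within the framework of Section~\ref{sec-m-aryTree} no valid run of the scheme with $m+1$ sources even exists, and this already forces the maximum admissible per-iteration source count to be $m$.
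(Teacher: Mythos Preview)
Your achievability direction (Modification~II suffices for $m$ sources) is fine and matches the paper. The converse, however, diverges from the paper's argument and, in the form you ultimately commit to, does not establish the lemma.

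The paper's converse is an \emph{observational ambiguity} argument, not a protocol-breakdown argument. For $m=2$ it considers a binary tree in which the three source probes combine in two successive degree-$3$ coding points, so that one component of receivers sees $x_1+x_2+x_3$. The point is that the very same partition of leaves into the four observation classes $\{x_1\},\{x_2\},\{x_3\},\{x_1+x_2+x_3\}$ arises from \emph{two non-isomorphic} internal structures, depending on which pair of sources joined first. Thus the receiver observations do not determine the topology, which is exactly what ``cannot uniquely infer'' means. Crucially, in the paper's scenario the Modification~II forwarding rule is perfectly well defined at every node (each coding point has two source neighbors and one non-source neighbor); the failure is in the inference step, not in the execution of the protocol.

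Your ``cleaner route'' shows something weaker and different: that for one particular placement of $m+1$ sources (one in each subtree of a single node $v$), the literal Modification~II rule has no outgoing edge at $v$. This does not prove non-identifiability of the topology; it only says that this specific forwarding rule is ill-posed in one configuration. It says nothing about other source placements (e.g., two sources in the same subtree of $v$, where the rule is well defined everywhere), and it leaves open whether a trivially patched rule or a different placement could still recover the tree. Your earlier instinct, that one should exhibit distinct trees producing the same receiver observations, is exactly what the paper does and is what you should make rigorous, rather than retreating from it.
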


\begin{proof}
We show that if we use $m+1$ sources to infer the topology of a full $m$-ary tree, it cannot be uniquely identified. For example, consider a binary tree with three sources sending
$x_1=[1,0,0]$, $x_2=[0,1,0]$, and $x_3=[0,0,1]$ respectively, to all other leaves in the tree. Assume that the three probe packets meet at one intermediate
node; thus, we divide the leaves into four components, which observe $x_1$, $x_2$,
$x_3$, and $x_1+x_2+x_3=[1,1,1]$ respectively. Since the degree of
intermediate nodes is three, we conclude that two of the three sources must have joined at
one intermediate node first, and then their result must have joined with the
third source in another intermediate node, so that they result in
$x_1+x_2+x_3$ in the last component. The first two sources can be
either $x_1, x_2$ or $x_2, x_3$. Therefore, we cannot uniquely
infer the underlying binary tree topology by observing these four
components. The same discussion applies to larger full $m$-ary trees ($m>2$).
\end{proof}

{\em Note.} In the presence of loss, the same
argument as in Section~\ref{sec-binaryLossy} applies, \ie we can
assign directions to edges in each iteration, so that
our algorithms are applicable to the lossy case as well.

\subsubsection{General M-ary Trees} \label{sec-generalTree}

In general $m$-ary trees, the degree of intermediate nodes varies from three up to a maximum of
$m+1$. We can still apply Alg.~\ref{alg-binaryLossless} and infer the tree topology in $O(n)$ iterations. We can also apply Modification I described in
Section~\ref{sec-m-aryTree}; the operations are still performed over $F_{m^2}$ since probe packets may meet at an intermediate node of degree $m+1$. However, we cannot apply Modification II here: since probe packets may meet at an intermediate node of degree three, we cannot use more than two sources according to Lemma \ref{thm-Thm2}, although there exist larger degree nodes in the tree.

\section{Inferring Directed Acyclic Graphs (DAGs)}
\label{sec-extension}

\subsection{From a Single-Tree to Multiple-Tree Topologies}

So far, we considered undirected
trees. Let us now consider directed trees, which are a special case of DAGs.
\begin{example}
Assume that we assign
directions to the links of the binary tree in
Fig.~\ref{fig-binaryExample}, all from the top to the bottom. Clearly, we can no longer send probe packets in arbitrary directions in each iteration.  However, we can still infer some
information about the topology. Assume that we send probes
 from the source nodes $1$ and $2$, and we observe $x_1\oplus
x_2$ at the receiver nodes $5$, $6$, and $7$. Therefore, we
identify three components $\mathcal{L}_1=\{1\}$,
$\mathcal{L}_2=\{2\}$, and $\mathcal{L}_3=\{5,6,7\}$, together with the
intermediate nodes $A$ and $D$, and three edges $1A$, $2A$, and $AD$,
which connect the three components together. However, we cannot
obtain more information about the internal structure of the
component $\mathcal{L}_3$ or any other part of the tree network.
\end{example}

Next, consider a 2-by-2 network as defined in Section \ref{sec-statement}, \ie a directed acyclic graph with two sources, two receivers and predetermined routing. Note that directed trees are only one type among all four possible types of the basic 2-by-2 components of any multiple-tree network, as defined in Section~\ref{sec-statement}.  There exist four 2-by-2 topologies, as shown in Fig.~\ref{fig-2by2}, which were first defined in \cite{journal, probing}. Following the same terminology as in \cite{journal, probing},  we refer to Fig.~\ref{fig-2by2}(a), (b), (c) and (d) as type 1, 2, 3 and 4,
respectively. Type 1 is called {\em shared} \cite{journal,
probing} since the joining points for both receivers coincide
($J_1=J_2$) and the branching points for both sources coincide
($B_1=B_2$). The other three types (types 2, 3 and 4) are called
{\em non-shared} since they have two distinct joining points and two
distinct branching points.

\begin{figure*}[t!]
\centering \subfigure[type 1: shared]{\includegraphics[scale=0.2]{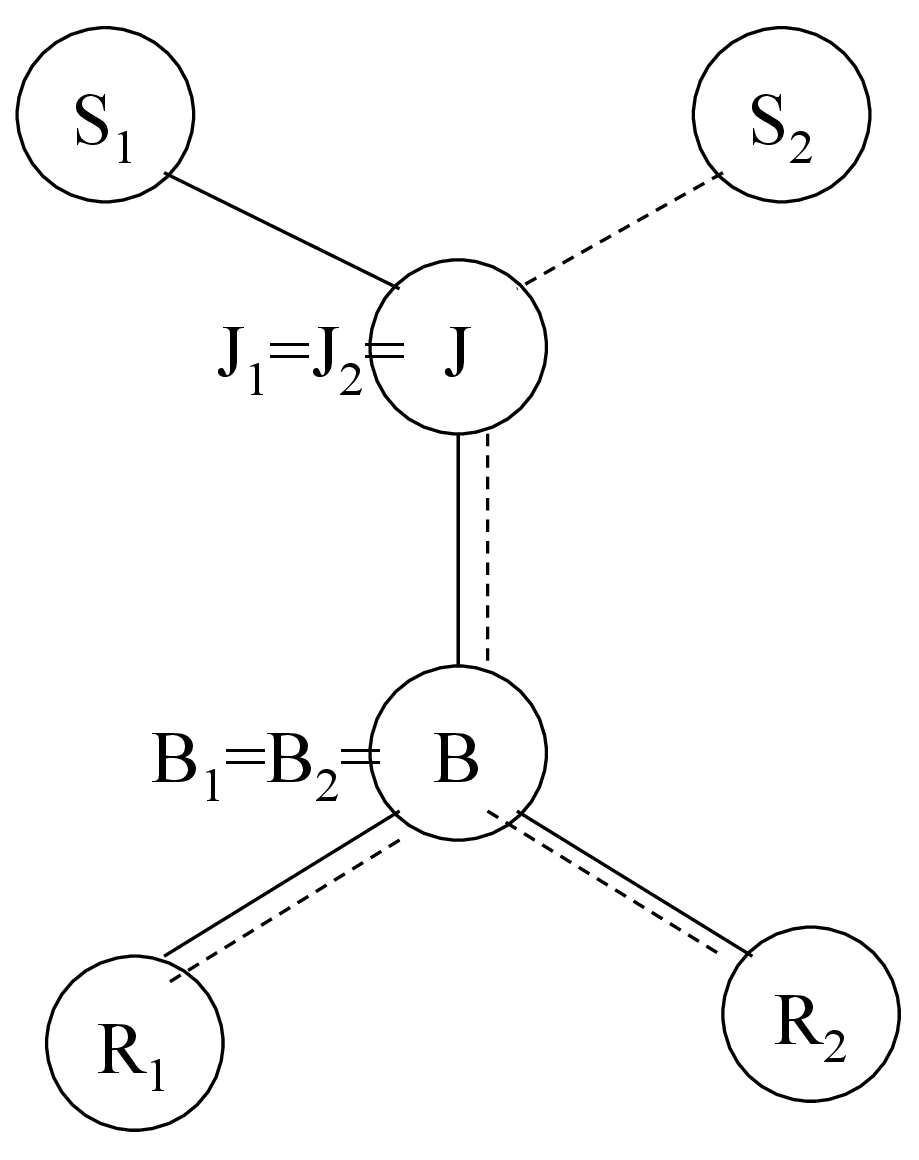}} \hspace{0.5cm} \subfigure[type 2: non-shared]{\includegraphics[scale=0.2]{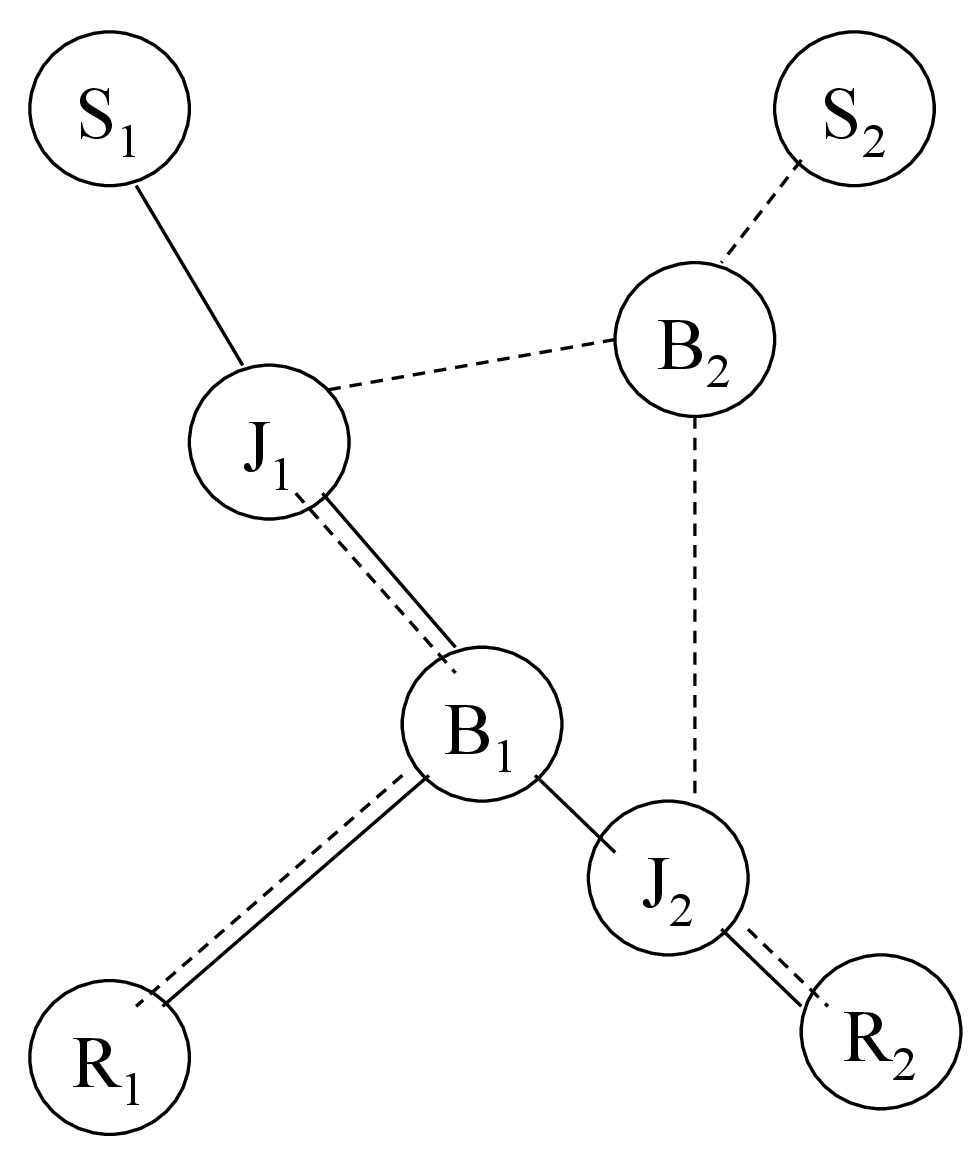}} \hspace{0.5cm} \subfigure[type 3: non-shared]{\includegraphics[scale=0.2]{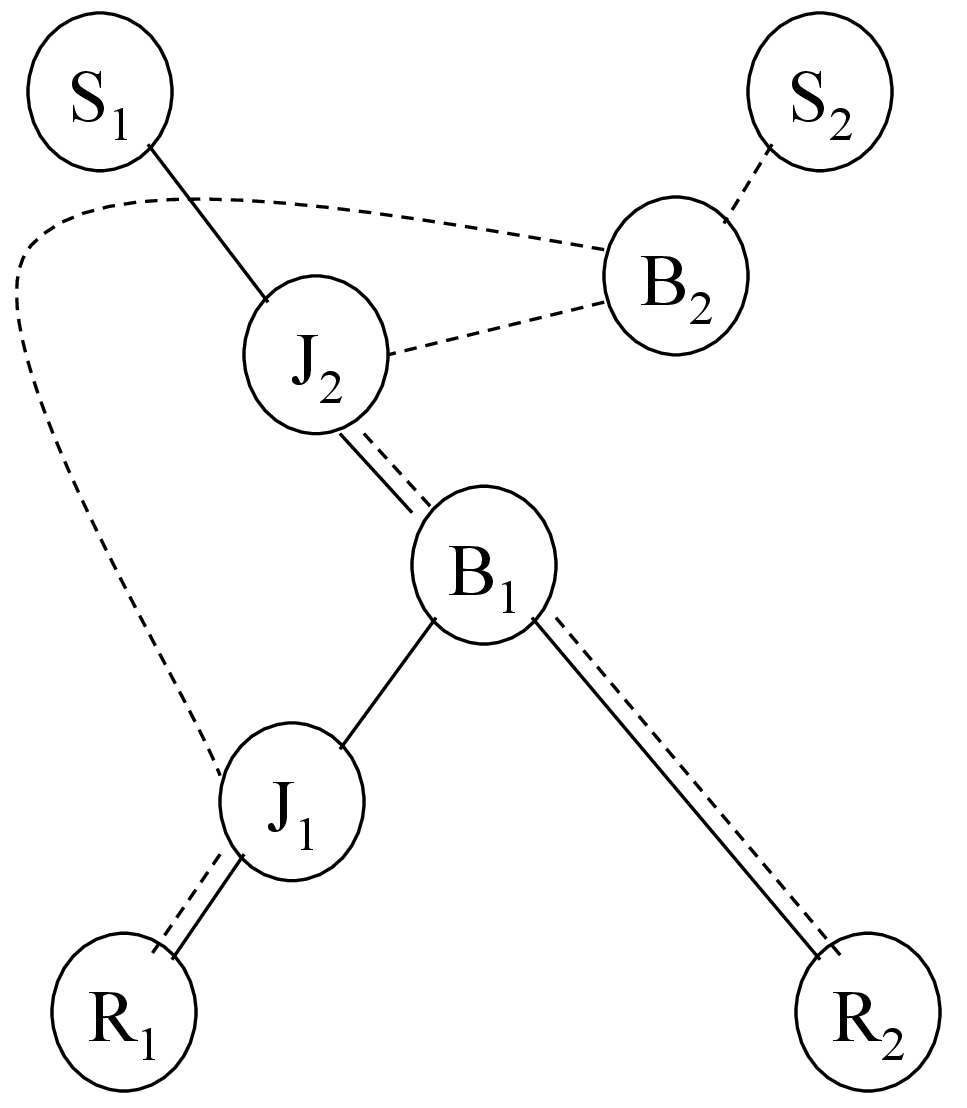}} \hspace{0.5cm} \subfigure[type 4: non-shared]{\includegraphics[scale=0.2]{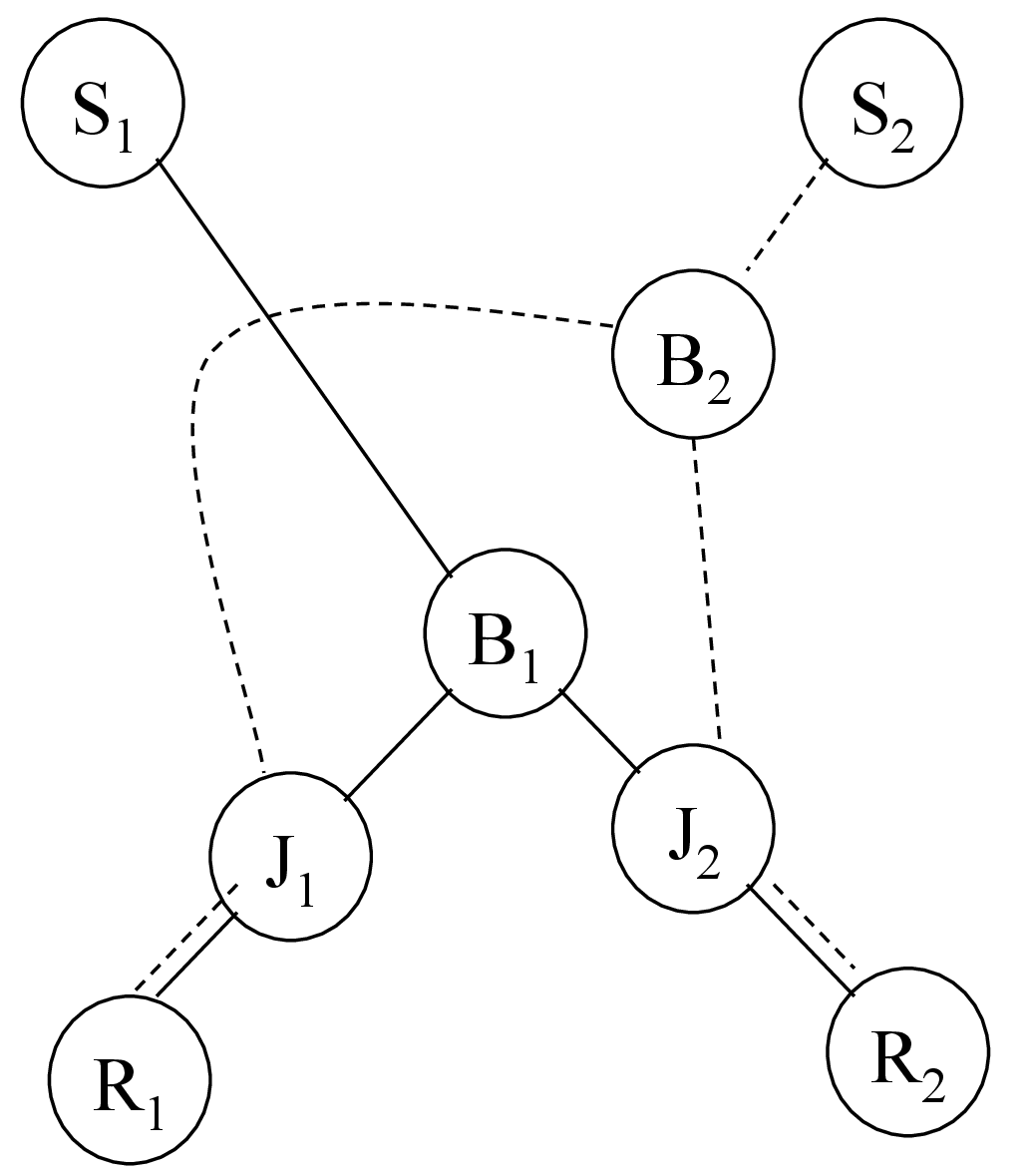}} \caption{The four possible types of a 2-by-2 subnetwork component, as defined in \cite{journal}. There are two sources ($S_1,S_2$) multicasting packets $x_1,x_2$ to two receivers ($R_1, R_2$). (The 1-by-2 topology of $S_1$ is a tree composed of $S_1, B_1, R_1, R_2$. Similarly, The 1-by-2 tree rooted at $S_2$ is $S_2, B_2, R_1, R_2$. $J_1$ and $J_2$ are the joining points, where the paths from $S_2$ to $R_1$ and $R_2$, join/merge with $S_1$'s topology.)}
\label{fig-2by2}
\end{figure*}

In a directed tree, all 2-by-2 components are of type 1. However, in a
general M-by-N topology, several different 2-by-2 types may co-exist.
The algorithms described so far can identify type 1 2-by-2
topologies, and thus, trees (either completely or partially, as
described above). However, they cannot distinguish between type 1
and type 4 2-by-2's, as described in the following example.
\begin{example}
Consider Fig. \ref{fig-2by2} (a)
and (d). Assume that in both cases, we send $x_1,x_2$ from $S_1,S_2$
to $R_1,R_2$ and that $x_1,x_2$ meet (arrive within the same $W$) at any joining point. Therefore, in both type 1 and type 4,
both receivers observe $x_1+x_2$, and we cannot distinguish between the two types.
\end{example}
In general, unlike single-tree networks, the observations do not uniquely characterize
the underlying topology in multiple-tree networks. The reason is that once two sources in a tree network transmit their probe packets, they at most meet at one coding point for all the receivers, as we saw in Section~\ref{sec-trees}. On the other hand, in a multiple-tree network, probe packets may meet at different coding points for different receivers, as depicted in Fig.~\ref{fig-extension}. Therefore, we need a different approach.

\begin{figure}[t]
\begin{center} {\includegraphics[scale=0.4]{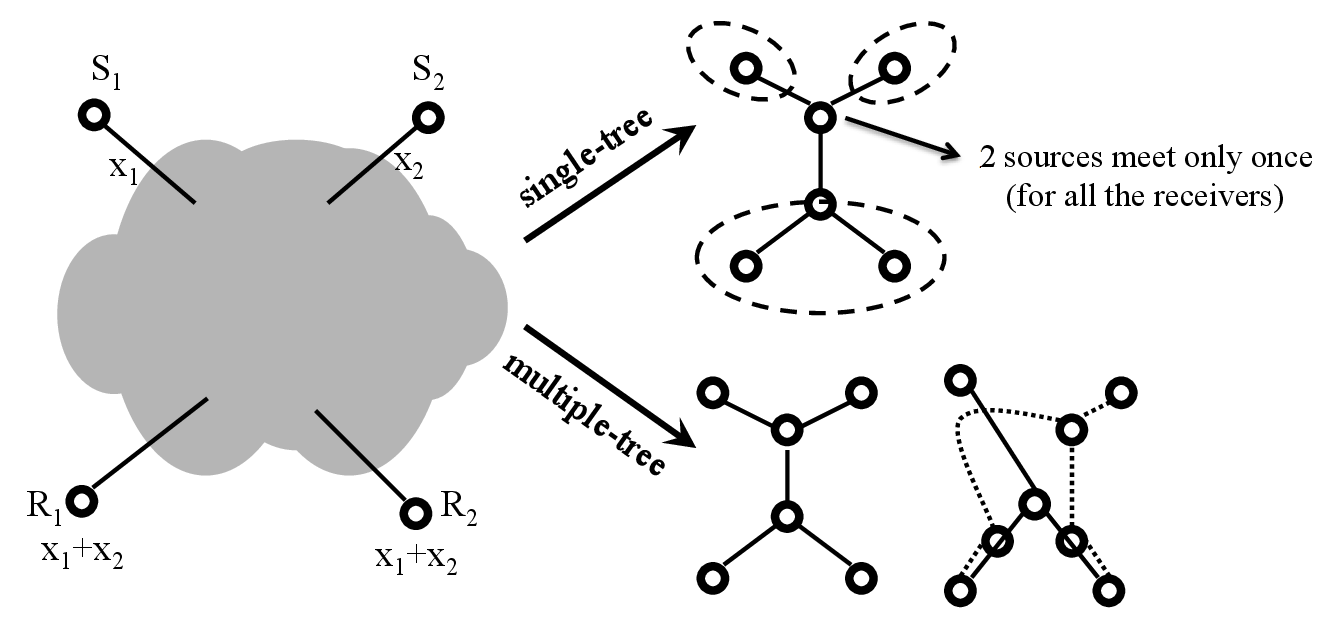}}
\vspace{-0.5em}
\caption{Single-tree vs. multiple-tree
topologies. Consider a single iteration. In a multiple-tree topology, unlike the single-tree topology, the observations at the receivers no longer uniquely identify the topology.} 
\label{fig-extension}
\vspace{-1.0em}
\end{center}
\end{figure}

{\bf Problem Statement.} Our goal in this section is to infer a multiple-tree topology, or an ``M-by-N'' topology according to the terminology of Section \ref{sec-statement}. Similarly to \cite{journal}, we take two steps. In the first step (Section \ref{sec-2by2}), we use several experiments and we exactly identify the type of every 2-by-2 component. In the second step (Section \ref{sec-merging}), we merge these 2-by-2 subnetwork components to reconstruct the M-by-N network.

{\bf Operation of Sources.} Pairs of sources are selected and send up to $countMax$ coordinated multicast packets to all receivers. As in the general setup, probes are spaced apart by intervals of length $T$. In addition, we introduce a difference in the sending time of the two sources, which we call the offset $u$. W.l.o.g., let  $S_1$ send first and $S_2$ second.

The timing parameters $T, u, W$ are coarsely tuned so as to create observations that can distinguish among different 2-by-2 types. In particular, (i) $T>> W$ ensures that only probes within the same experiment are coded together. To be more precise, we choose $T\geq g\cdot W$, where $g$ is the maximum number of joining points on any $(S_i,R_j)$ path in the topology. In the worst case, there can be $N$ joining points in a row and thus, $g\leq N$. However, in practice, $g$ is usually a lot smaller. (ii) $W>>$ path delay (between the sources and the joining points) ensures that source packets meet at the joining points despite link delays. (iii) $u$ is selected randomly in each iteration, so that it forces probes to meet at different points, or not meet at all, in different iterations. Finally, coarse selection of $T,W$ with rough estimates of upper bounds on link and path delays is sufficient.

{\bf Operation of Receivers.}
For a given 2-by-2 subnetwork, let the observations at the two receivers  be $R_1=c_{11}x_1+c_{12}x_2$, $R_2=c_{21}x_1+c_{22}x_2$. Based on these observations, we design {\em Inference} algorithms that identify the 2-by-2 type (in Section \ref{sec-2by2}) and {\em Merging} algorithms that build the M-by-N from the 2-by-2's (in Section \ref{sec-merging}).

{\bf Operation of Intermediate Nodes.} In DAGs, the operation of an intermediate node, depending on whether it acts as a joining point or a branching point, is summarized in Alg.~\ref{alg-joiningPoint} and Alg.~\ref{alg-branchingPoint}, respectively. A joining
point (J) adds and forwards packets, while a branching point (B)
forwards the single received packet to all ``interested'' links downstream. A link is ``interested'' in the routing sense if it is the next hop for at least one source packet in the network coded packet.

\subsection{Identifying  2-by-2 Components}
\label{sec-2by2}

In this section, we propose an approach to exactly identify a 2-by-2 component, using the same intuition as in trees, \ie coding operations result in observations that can uniquely characterize the underlying 2-by-2.  Our approach builds on \cite{journal} and improves over it by uniquely distinguishing among all four 2-by-2 types, while \cite{journal} could only distinguish between shared and non-shared types.

\begin{algorithm}[t!]
{\footnotesize \caption{\label{alg-joiningPoint} \footnotesize {\bf Operation at
Joining Point $J$, in DAGs.} When two sources multicast to N receivers, $J$
has two incoming links and one outgoing link.
Additions are performed over $\mathbf{F}_q$.}
\begin{algorithmic}[1]
\FOR{every time window $W$} \IF{($J$ receives two packets within $W$
from its incoming links)} \STATE as soon as the last packet arrives, it adds
them up, and forwards the resulting packet downstream. \ELSIF{($J$
receives only one packet within $W$)} \STATE it forwards the packet
downstream. \ELSIF{($J$ does not receive any packet within $W$)}
 \STATE /*nothing to do.*/
\ENDIF \ENDFOR
\end{algorithmic}
}
\end{algorithm}

\begin{algorithm}[t!]
{\footnotesize \caption{\label{alg-branchingPoint} \footnotesize {\bf Operation at
Branching Point $B$, in DAGs.} While two sources multicast to N receivers,
$B$ has one incoming packet and multiple outgoing links.}
\begin{algorithmic}[1]
\FOR{each incoming packet} \IF{the incoming packet is $x_1$ (or
$x_2$)} \STATE forward it only on the outgoing links that are next
hops for $S_1$ ($S_2$). \ELSE \STATE /* The incoming packet is of the
form $ax_1+bx_2$. */ \STATE forward the packet to all outgoing links.
\ENDIF \ENDFOR
\end{algorithmic}
}
\end{algorithm}

\subsubsection{Lossless 2-by-2} \label{sec-2by2Lossless}

First, we provide an algorithm to identify the type of a 2-by-2 component without packet loss.
In the first experiment, sources $S_1,S_2$ multicast probe packets
$x_1, x_2$ to $R_1,R_2$. We begin with the assumption that $S_1,
S_2$ act simultaneously, or in practice within the synchronization
offset. A choice of large $W$ guarantees that $x_1,x_2$ meet at both joining points $J_1,J_2$, which add the incoming probes over $\mathbf{F}_3$.
Depending on the underlying 2-by-2 type, $R_1, R_2$ observe one of the following pairs:
\begin{itemize}\addtolength{\itemsep}{-.35\baselineskip}  
\item type 1:   $R_1$: $x_1+x_2$   ,  $R_2$: $x_1+x_2$
\item type 2:   $R_1$: $x_1+x_2$   ,  $R_2$: $x_1+2x_2$
\item type 3:   $R_1$: $x_1+2x_2$  ,  $R_2$: $x_1+x_2$
\item type 4:   $R_1$: $x_1+x_2$   ,  $R_2$: $x_1+x_2$
\end{itemize}

Types 2 and 3 result in unique observations that make them
distinguishable from any other type; \ie one such observation suffices to identify
type 2 or type 3. However, types 1 and 4 result in the same pair of observations; therefore,
we need to design different experiments to get observations that can uniquely characterize type 1 or type 4.

In the next experiment, we exploit the observation, first made in \cite{journal},
that type 1 is the only 2-by-2 where the two joining points coincide ($J_1=J_2=J$).
Therefore, the observations at the two receivers are always the
same: either $x_1+x_2$ when the two packets meet at $J$; or a single
packet ($x_1$ or $x_2$) when the two packets do not meet at $J$. In
contrast, type 4 has two different joining points $J_1 \neq J_2$. If
we force packets to meet only at one of the joining points but not
at the other one, the receivers will have different observations. These
are observations \#3 and \#4 in Table~\ref{table-losslessObs} and
they can uniquely characterize type 4.

These observations can be achieved by appropriately selecting the offset $u$ in
the sources' sending times. $u$
needs to be large enough so that after addition to the link delays,
it can affect $W$: if $D_1,D_2$ represent the 
delays on the paths from $S_2$ to $J_1,J_2$, respectively, $u$
must be in $[W-D_1, W-D_2]$\footnote{\label{footnote-offset} In 2-by-2
components, this interval is close to $W$, since $D_1$ and $D_2$ are small compared to $W$. In more general 2-by-N networks that we consider for our simulations, there exist multiple
links between the sources and the joining points, link delays are on the
order of tens of ms, and $W$ is on the order of hundreds of ms.
Therefore, we can safely choose $u \in [f\cdot W,W]$ in the general
case, where $0<f<1$ is a tunable parameter. We choose
$f=0.7$ in our simulations, to force different observations
at the two receivers.}.\footnote{In fact, we can obtain similar observations without using the offset $u$, but instead, by changing $W$ in a range of values, from the maximum path delay (as it is now), down to 0. One can check that Alg. 5 (without $u$) can still be applied in this case. Therefore, using the offset is not really crucial in our scheme.}

\begin{table}[t!]
{\footnotesize \caption{\label{table-losslessObs}{\bf
Lossless Case.} Possible observations for types 1 and 4 2-by-2
topologies. (Observation \#1 occurs when the sources are
synchronized. Observations \#2-4 occur when $S_2$ sends after $S_1$, by an offset
$u\in [f\cdot W,W]$.)}
\vspace{-1.0em}
\begin{center}
\begin{tabular}[t]{|c||c|c||c|c|}
\hline
Observation        &   \multicolumn{2}{|c||}{Type 1}      &      \multicolumn{2}{|c|}{Type 4}              \\
Number             &    $\mathbf{R_1}$  & $\mathbf{R_2}$   &      $\mathbf{R_1}$    &    $\mathbf{R_2}$        \\
\hline \hline
1                         &         $x_1+x_2$   &       $x_1+x_2$  &            $x_1+x_2$    &    $x_1+x_2$   \\
\hline \hline
2                         &         $x_1$       &      $x_1$       &             $x_1$       &    $x_1$       \\
\hline
3                         &                     &                  &           $x_1+x_2$     &    $x_1$        \\
\hline
4                         &                     &                  &            $x_1$        &    $x_1+x_2$            \\
\hline
\end{tabular}
\end{center}
}
\vspace{-1.5em}
\end{table}

\begin{algorithm}[t!]
{\footnotesize \caption{\label{alg-generalLossless} \footnotesize {\bf Lossless
Case - Inferring a 2-by-2 component.} Sources $S_1$ and $S_2$ multicast
$x_1$ and $x_2$. Receivers observe $R_1=c_{11}x_1+c_{12}x_2$ and
$R_2=c_{21}x_1+c_{22}x_2$.}
\begin{algorithmic}[1]
\STATE n=1; /*first experiment*/ \IF{$c_{22}>c_{12}$} \STATE Output
type 2; \ELSIF{$c_{22}<c_{12}$} \STATE Output type 3; \ELSE \STATE
/*It is $R_1=R_2$*/ \WHILE{$n<countMax ~\& ~ R_1==R_2$} \STATE Draw
offset $u$ uniformly at random out of $[f\cdot W,W]$. \STATE Send
probes; $S_2$ transmits $u$ time later than $S_1$. \IF{$R_1\neq
R_2$} \STATE Output type 4; \STATE Exit; \ENDIF \STATE n++; \ENDWHILE
\STATE Output type 1; /* It is $n==countMax$*/ \ENDIF
\end{algorithmic}
}
\end{algorithm}

Alg.~\ref{alg-generalLossless} summarizes the experiments we perform in order to infer the type of
a 2-by-2 network. Types 2 and 3 are identified in the first observation.
Type 4 is identified the first time that the two receivers see different observations.
If after $countMax$ trials, we still have not seen any different observations at the two receivers, then we declare the 2-by-2 to be of type 1.

{\em Choosing $countMax$.} $countMax$ should be large enough to ensure small probability of error. The probability of error of Alg.~\ref{alg-generalLossless} can be computed as follows. Let $X=I\{R_1=R_2\}$ indicate whether the two observations are the same or not; it is a Bernoulli random variable with success probability $Pr\{R_1=R_2\}$. The number of required experiments is a geometric random variable. The only possible error is to mistakenly declare type 4 as type 1, which happens with probability:
\begin{equation}
Pr(error)=1-Pr(type=1|X_1=1,\cdots,X_{countMax}=1)=1-\frac{1}{1+(Pr(X=1|type=4))^{countMax}}
\end{equation}
In type 4, $X=1$ occurs when $u\notin[W-D_1,W-D_2]$, \ie with probability $1-\frac{|D_1-D_2|}{(1-f)\cdot W}$. 
Thus, Alg.~\ref{alg-generalLossless} identifies any 2-by-2 topology in $countMax$ experiments with the following error probability:
\begin{equation}
\label{eq-error-Alg5}
Pr(error)=1-\frac{1}{1+(1-\frac{|D_1-D_2|}{(1-f)\cdot W})^{countMax}}
\end{equation}
We can then find $countMax$ by replacing the appropriate values \cite{active-topology-arxiv}. One can calculate that in order to ensure an accuracy of $99\%$ in distinguishing between types 1 and 4 2-by-2's, $countMax$ needs to be $\sim 450$. However, this is a pessimistic upper bound: simulation results in Section~\ref{sec-simul} show that a much smaller $countMax$ is sufficient in practice.

\subsubsection{Lossy 2-by-2} \label{sec-2by2Lossy}

Let us now consider a 2-by-2 network where packets may be lost on some links. In this
case, we can no longer guarantee meetings of $x_1$ and $x_2$ at the
joining points and predictable observations at the receivers. There
are two differences from the lossless case. First, because of random packet loss,
each experiment might result in different outcomes, shown in
Table~\ref{table-lossyObs}. Second, there are common observations across all four types,
as opposed to just between types 1 and 4. We divide the observations
in Table~\ref{table-lossyObs} into three groups: (i) at least one of
the receivers does not receive any packet (``-'') due to loss, (ii)
both receivers have the same observation $R_1=R_2$, and (iii) the
two receivers have different observations $R_1 \neq R_2$.

\begin{table*}[t!]
{\footnotesize \caption{\label{table-lossyObs}{\bf Lossy Case}.
Possible observations for all four types of 2-by-2 topologies.
(Sources send synchronized and $W$ is large. Observation \#13 for
types 2, 3 occurs only when $S_2$ sends with offset $u\in [f\cdot
W,W]$ after $S_1$.) We divide the observations into 3 groups:
(i) at least one receiver does not receive any packet, (ii) $R_1=R_2$ and (iii) $R_1 \neq R_2$.} \vspace{-1.5em}
\begin{center}
\begin{tabular}[t]{|c||l|c|c||l|c|c||l|c|c||l|c|c|}
\hline
  &   & \multicolumn{2}{|c||}{Type 1}  &  &  \multicolumn{2}{|c||}{Type 2} &  & \multicolumn{2}{|c||}{Type 3}  &  &  \multicolumn{2}{|c|}{Type 4} \\

$\#$ &     grp        &   $\mathbf{R_1}$ &  $\mathbf{R_2}$  &   grp    & $\mathbf{R_1}$   &  $\mathbf{R_2}$ &   grp   &   $\mathbf{R_1}$  & $\mathbf{R_2}$ &  grp    & $\mathbf{R_1}$  &  $\mathbf{R_2}$\\
\hline \hline
1 &   i  &       -        &    -             &  i  &      -            &          -       &  i &    -             &       -    &   i &    -      &    -    \\
\hline
2 &             &        -       &    $x_1+x_2$     &            &  -            &      $x_1+2x_2$  &             &  $x_1+2x_2$    &       -         &           &    -      &  $x_1+x_2$ \\
\hline
3 &             &          -     &    $x_1$         &            &    -           &      $x_1+x_2$     &           & $x_1+x_2$    &       -          &           &    -      &    $x_1$   \\
\hline
4 &             &     -          &    $x_2$          &           &   -           &         $x_1$       &           &  $x_1$     &       -         &              &    -      &    $x_2$  \\
\hline
5 &             &    $x_1+x_2$     &    -            &           & -            &        $x_2$        &            &  $x_2$      &       -          &            &    $x_1+x_2$   &    -    \\
\hline
6 &             &     $x_1$        &    -            &           & $x_1+x_2$     &          -          &           &   -         &     $x_1+x_2$    &            &     $x_1$      &    -    \\
\hline
7 &             &  $x_2$         &    -              &           & $x_1$        &          -          &            &    -         &      $x_1$       &           &  $x_2$      &    -    \\
\hline
8 &  ii  &   $x_1+x_2$    &    $x_1+x_2$       &          &  $x_2$       &          -          &            &   -         &      $x_2$       &   ii &    $x_1+x_2$   &   $x_1+x_2$ \\
\hline
9 &             &  $x_1$        &    $x_1$           &  ii & $x_1+x_2$    &       $x_1+x_2$     &  ii &  $x_1+x_2$   &    $x_1+x_2$     &           &   $x_1$     &    $x_1$  \\
\hline
10 &            &   $x_2$       &    $x_2$           &            & $x_1$        &       $x_1$         &           & $x_1$       &    $x_1$         &             &   $x_2$     &    $x_2$  \\
\hline
11 &            &               &                    &            &   $x_2$        &       $x_2$         &         &  $x_2$       &     $x_2$        &  \textbf{iii}     &    $\mathbf{x_1}$    &   $\mathbf{x_1+x_2}$ \\
\hline
12 &            &               &                    &  \textbf{iii} &    $\mathbf{x_1+x_2}$    &     $\mathbf{x_1+2x_2}$    &   \textbf{iii}  &  $\mathbf{x_1+2x_2}$    &     $\mathbf{x_1+x_2}$    &              &     $\mathbf{x_1+x_2}$  &    $\mathbf{x_1}$    \\
\hline
13 &            &               &                    &             &   $\mathbf{x_1}$   &      $\mathbf{x_1+x_2}$      &             &   $\mathbf{x_1+x_2}$   &   $\mathbf{x_1}$      &               &   $\mathbf{x_1}$   &   $\mathbf{x_2}$   \\
\hline
14 &            &               &                    &             &  $\mathbf{x_1}$       &      $\mathbf{x_2}$           &             &    $\mathbf{x_2}$    &      $\mathbf{x_1}$       &         &    $\mathbf{x_2}$      &      $\mathbf{x_1}$ \\
\hline
15 &            &               &                    &             &  $\mathbf{x_1+x_2}$     &     $\mathbf{x_2}$          &            &    $\mathbf{x_2}$      &   $\mathbf{x_1+x_2}$    &         &     $\mathbf{x_1+x_2}$    &    $\mathbf{x_2}$\\
\hline
16 &            &               &                    &             &            &                 &               &        &          &             &   $\mathbf{x_2}$    &   $\mathbf{x_1+x_2}$  \\
\hline
\end{tabular}
\end{center}
}
\vspace{-1.0em}
\end{table*}

We choose to ignore the observations of {\em group (i)} because they
can occur in any of the four 2-by-2 types and thus, they do not help to distinguish among 2-by-2's in the deterministic way adopted in
this paper. Observations of {\em group (ii)} can also be the result of any
2-by-2 type: unlike the lossless case, where $R_1=R_2$ is unique to
type 1 or 4 topologies, any of the four topologies may result
in such observations if some packets are lost. We observe that group (ii)
are the only possibility for type 1 topology, apart from the group
(i) that we ignore, while all other three 2-by-2 types may result in
either $R_1=R_2$ or $R_1 \neq R_2$. Therefore, if after $countMax$
trials, we only have observations of group (ii), we declare the topology
to be type 1.

\begin{algorithm}[t]
{\footnotesize \caption{\label{alg-generalLossy} \footnotesize {\bf Lossy Case -
Inferring a 2-by-2 component.} Sources $S_1$ and $S_2$ multicast
$x_1$ and $x_2$, respectively. Receivers observe $R_1=c_{11}x_1+c_{12}x_2$ and
$R_2=c_{21}x_1+c_{22}x_2$. The variable \textit{type} stores our estimate of
the type of the 2-by-2 component and it gets updated during the experiments.}
\begin{algorithmic}[1]

\STATE $n=1$; /*first experiment*/ \STATE type=0; /*initialization*/
\WHILE{$n\leq countMax$} \IF{$R_1\neq[0,0] ~\& ~ R_2\neq [0,0]$}
\IF{$c_{22}>c_{12}$} \IF{type $\neq 3$} \STATE type=2; \ELSE \STATE
type=4; Break; \ENDIF \ELSIF{$c_{22}<c_{12}$} \IF{type $\neq 2$}
\STATE type=3; \ELSE \STATE type=4; Break; \ENDIF \ELSIF{$type==0
~\& ~ R_1==R_2$} \STATE type=1; \ENDIF \ENDIF \STATE n++; \STATE
Draw offset $u$ uniformly at random out of $[f\cdot W,W]$.

\STATE Send probes; $S_2$ transmits $u$ time later than $S_1$.
\ENDWHILE \STATE Output type.
\end{algorithmic}
}
\end{algorithm}

In observations of {\em group (iii)}, it is $R_1 \neq R_2$, which
means that $c_{12}\neq c_{22}$ and/or $c_{11}\neq c_{21}$. An important observation
is that the difference of the coefficients between the two receivers contains
topology-related information. W.l.o.g., we focus on the coefficient
of $x_2$ and look at the difference $c_{12}-c_{22}$.
Table~\ref{table-lossyObs} shows that $c_{12}-c_{22}<0$ can only
occur in type 2 or type 4 topologies; while $c_{12}-c_{22}>0$ can
only occur in a type 3 or 4 topology. Note that the coefficient is larger on one side (\eg $c_{12}>c_{22}$) when the probe ($x_2$) goes through two joining points on its way to one receiver (in this case, $R_1$) and through one joining point on its way to the other receiver ($R_2$). By performing several independent experiments and collecting several observations of group
(iii), we can distinguish among the candidate topologies. If after
$countMax$ experiments, there are only observations of group (ii) or
(iii) with $c_{12}-c_{22}\leq0$, we declare the topology as type 2. If
there are only observations of group (ii) or (iii) with
$c_{12}-c_{22}\geq0$, we declare it as type 3. If there are
observations of group (ii) or (iii) with both $c_{12}-c_{22}<0$ and
$c_{12}-c_{22}>0$, we declare it as type 4.

In our experiments, we try to create those observations that reveal the
topology. These can occur either naturally, as the result of packet
loss, or artificially, by us introducing an offset $u$ in $S_2$'s
sending time with respect to $S_1$. To help these observations
occur, especially for small loss rates, and similarly to the lossless
case, we use a random offset $u\in [f\cdot W,W]$.
To make these experiments independent, we space apart successive sets of probes by
roughly selecting $T \ge 3W$, which is sufficient since there are at most two joining points on any 
$(S_i,R_j)$ path in a 2-by-2.

Alg.~\ref{alg-generalLossy} summarizes the 2-by-2 inference
for lossy networks. The algorithm is simple and follows a deterministic
approach: one observation, or a set of observations, is sufficient
to uniquely distinguish among types. For example, at least
one observation of group (iii) rules out the type 1 topology; a
pair of group (iii) observations with both $c_{12}-c_{22}>0$ and
$c_{12}-c_{22}<0$ indicates type 4; etc. As a result, we require
less experiments compared to thousands of arrival order
measurements required by \cite{journal, probing} for statistical
significance. In addition and more importantly, we identify the
exact 2-by-2 type, while \cite{journal} was only able to distinguish between
shared and non-shared types. The following Lemma describes the probability of error of Alg.~\ref{alg-generalLossy} with respect to the number of experiments ($countMax$) more precisely.

\begin{lemma} \label{thm-Thm3}
Alg.~\ref{alg-generalLossy} identifies any 2-by-2 topology with $Pr(error)\leq \sum_{i=1}^{2} (1-\rho'_i)^{countMax}$ in $countMax$ experiments, where $\rho'_i={(1-p)}^6 (p+(1-p) \gamma_i) (p+(1-p) \overline \gamma_{i'})$, $i,i'=1,2$, $i\neq i'$, $p$ is the link loss rate (same for all links), and $\gamma_i$ is the probability that probe packet $x_2$ arrives within $W$ at $J_i$ in a type 4 topology, \ie $\gamma_i=Pr(u+D_i\leq W)$, $i=1,2$.
\end{lemma}
The proof is provided in Appendix B.

\subsubsection{Inferring all 2-by-2's in a 2-by-N Network}
\label{sec-2byN}

Algorithms~\ref{alg-generalLossless} and~\ref{alg-generalLossy} can
be directly applied to a 2-by-N network, where
two sources multicast to $N$ receivers. A difference is that
intermediate nodes need to perform addition over a larger finite
field, of order larger than the maximum number of joining points on a path (g), since a packet may meet itself at all the joining points on the path. In the worst case, there can be $N$ joining points in a row and thus, the maximum required field size is the first prime greater than N. Algorithm \ref{alg-generalLossless} and Algorithm~\ref{alg-generalLossy} can
be performed on any pair of receivers among all $\binom{N}{2}$
possible pairs. The same set of 2-by-N probes can be used to infer, in
parallel and independently, the type of all 2-by-2 topologies. This
reduces the number of probes, as we re-use them, instead of sending
$\binom{N}{2}$ different sets of probes. The 2-by-N structure is
important for the merging algorithm in Section~\ref{sec-merging}.

\subsubsection{2-by-2's vs. other Subnetwork Components} \label{sec-otherStructures}

We now discuss why we choose to decompose an M-by-N network into 2-by-2 subnetwork components, as opposed to any other subnetwork structures $m-by-n$, $1\le m\le M, 1\le n \le N$:
\begin{itemize}\addtolength{\itemsep}{-0.35\baselineskip}  
\item 1-by-1: This is the smallest component and corresponds to measuring a single end-to-end path.
However, it reveals neither joining nor branching points.
\item 1-by-2 and 2-by-1: These correspond to a 2-leaf multicast or a reverse-multicast tree,
respectively. The 2-by-1 consists of 2 sources, one coding point, and 1 receiver.
The 2-by-1 cannot identify the branching points while the 1-by-2 cannot identify the joining points. Similar comments apply to M-by-1 and 1-by-N.
\item 2-by-2: This is the smallest structure that gives information about the relative
locations of joining and branching points.
\item m-by-n, with $2<m<M, 2<n<N$: If we consider larger structures,
there is an exponentially larger number of possible types,
which requires more complicated inference algorithms. For example, there exist 19 possible types for a 2-by-3 structure.
\item M-by-N: In the extreme case, we need to enumerate all possible M-by-N topologies as in \cite{jaggi}.
\end{itemize} 
The larger the subnetwork component we use as a building block, the less components we need to infer and the simpler the merging algorithm. However, as the size of the basic component grows, the number of possible types increases exponentially and the inference step becomes increasingly complex. In this paper, we choose to decompose an M-by-N into 2-by-2 components, inspired by the approach in \cite{journal}. We note that 2-by-2 is the minimum size building block required to infer both joining and branching points and strikes a good tradeoff of inference vs. merging complexity.

\subsection{Merging Algorithm} \label{sec-merging}

Assuming knowledge of all 2-by-2 subnetwork components, from Section~\ref{sec-2by2},
we now merge them together to reconstruct the M-by-N network. We study merging in two different
scenarios: (i) when a 1-by-N tree topology is known, which is the same problem
studied in \cite{journal}; and (ii) without knowledge of any 1-by-N,
which is new to our work. Exploiting the accurately identified
2-by-2's, we can solve (i) exactly, which was previously only
approximately solved; and also solve (ii), which was previously not known how to address.

More precisely, our merging algorithm can identify every joining
point, in the sense that it can localize it between two branching points.
However, note that when there are several joining points in a row, without any branching point in between, it is not possible to identify the relative locations of these joining points with respect to each other. In fact, this is the case in a tree topology.

\subsubsection{\label{sec-merging1} Merging a 1-by-N and 2-by-2's into a 2-by-N}

In this section, we assume that the 1-by-N from $S_1$ to $N$
receivers is known using either the classic methods for single-tree topology inference \cite{survey} or our algorithms in Section~\ref{sec-trees} for tree networks.

This 1-by-N is a tree rooted at $S_1$ and contains
only branching points. We also assume that the 2-by-2's between
$S_1$, a new source $S_2$, and any pair of receivers are known,
using the algorithms of Section~\ref{sec-2by2}. Our goal is to
locate the joining points where paths from $S_2$ to the same $N$
receivers join $S_1$'s topology. We use the assumptions of
Section~\ref{sec-statement} for routing.

This problem was posed in \cite{journal, merging} and was solved there
in an approximate way. Bounds on the joining point locations in the
$S_1$ topology were provided within a sequence of consecutive
logical links. 
This was because the 2-by-2's are only identified as
shared or non-shared types in \cite{journal, probing}.

In contrast, we design Algorithm~\ref{alg-merging1}, which localizes each joining point for
each receiver to a single logical link, between two branching
points in the $S_1$ topology. Our algorithm is simpler, faster, and
more accurate: it can identify {\em all} joining points for any
topology and with lower complexity, thanks to our complete knowledge
of the 2-by-2 types.

\begin{example}
 Fig.~\ref{fig-abileneTopology} depicts a 2-by-9 topology constructed based on the Abilene network \cite{abilene}. Consider $R_1$: it forms a type 1 2-by-2 with $R_2$. Therefore, $J_1$ must
lie above $B_{1,2}$, so that there exists a unique path from each
source to $R_1$. We then need to localize $J_1$ with respect to $B_{1,3}$: $R_1$,$R_3$ form
a 2-by-2 of type 4; thus, $J_1$ must lie below $B_{1,3}$. $J_1$ is now localized to one link (between $B_{1,2}$ and $B_{1,3}$), and the algorithm ends here for $R_1$. Other
receivers are considered similarly. Note that a joining point can be
placed on any link from the receiver to $S_1$. Therefore, the number of steps
required to localize a joining point is at most equal to the height of the $S_1$ tree.
Also, when there is a group of receivers within which all
pairs are of type 1, the algorithm is run only once and it assigns the same joining
point to all of them. For this example, the
algorithm in \cite{journal} cannot completely resolve all joining
points, and provides bounds within a sequence of several logical
links instead.
\end{example}

\begin{algorithm}[t!]
{\footnotesize \caption{\label{alg-merging1} \footnotesize {\bf Merging
Algorithm:} Given the two sources $S_1,S_2$, a set of
receivers $R_1,R_2,\cdots,R_N$, the 1-by-N $S_1$ tree topology, and the
2-by-2 results from Alg.~\ref{alg-generalLossy} for any pair of
receivers $R_i,R_j$, this algorithm identifies a single link for the
location of every $J_i$ (the joining point for $R_i$), on $S_1$
topology.}
\begin{algorithmic}[1]

\FOR{each receiver $R_i$} \IF{$\exists$ $k<i$ such that the
$S_1,S_2,R_k,R_i$ 2-by-2 is shared} \STATE $J_i=J_k$; \ELSE \STATE
Let $B$ be the closest branching point to $R_i$. \WHILE{$J_i$ is not
localized to a single link} \STATE Let $R_j$ be any child of $B$
($j\neq i$). \STATE Based on the type of the 2-by-2 component
$S_1,S_2,R_i,R_j$, locate $J_i$ above/below $B$. \IF{($J_i$ is below
$B$) $||$ (($J_i$ is above $B$) $\&\&$ ($\nexists$ other branching
point above $B$ on $S_1$'s 1-by-N))} \STATE $J_i$ is localized to a
single link. \STATE Output this link; Break; \ELSE \STATE
$B=$ the next upstream branching point. \ENDIF \ENDWHILE \ENDIF
\ENDFOR

\end{algorithmic}
}
\end{algorithm}

\subsubsection{Merging 2-by-2's into a 2-by-N} \label{sec-merging2}

In this section, we infer a 2-by-N  without prior knowledge of any 1-by-N. Inference under this relaxed assumption is enabled by our exact knowledge of 2-by-2's and was not possible before \cite{journal, merging}. We first send probes over the 2-by-N and then merge all $\binom{N}{2}$ 2-by-2's, as described next.

\begin{example}
We first consider all shared (type 1) 2-by-2 components
and assign them the minimum number of branching and joining points
required. For example in Fig. \ref{fig-abileneTopology}, $B_{1,2},
B_{3,4}$ and $J_1=J_2, J_3=J_4=J_5=J_6=J_7=J_8=J_9$ are identified
in this step. Second, we consider all non-shared 2-by-2
topologies (of type 2, 3, or 4). We use the information about the
locations of the branching and joining points in each type to: (i)
add the minimum number of branching points required to the ones
already identified from the shared pairs; and (ii) assign joining
points to those receivers that have not been already assigned one.
In the example of Fig.~\ref{fig-abileneTopology}, an additional
branching point $B_{1,3}$ is required, which is connected to both
joining points $J_1=J_2$ and $J_3=J_4=J_5=J_6=J_7=J_8=J_9$, to
satisfy the 2-by-2's of type 4 between the two shared groups. No
additional joining point is required in this example.
\end{example}

This approach identifies the locations of all joining points,
between the $S_1$ and $S_2$ 1-by-N topologies, but it does not identify
all the branching points in the $S_1$ tree topology. Only the
``minimum'' $S_1$ topology is identified, {\em i.e.,} the tree made
by the ``necessary'' branching points. We define as
``necessary'' branching points the ones located below a joining
point of $S_1$ and $S_2$ in the 2-by-N. An ``unnecessary'' branching point is the child of another branching point with no joining point in between. For example in Fig. \ref{fig-abileneTopology}, this approach does not identify $B_{4,5}, B_{6,7}, B_{6,9}$, and directly connects their
children ($R_4, R_5, R_6, R_7, R_8, R_9$) to the upstream branching
point ($B_{3,4}$).

Note that the worst case input for this approach is a tree network. Since all
2-by-2's are of type 1, and the algorithm cannot reconstruct branching points
in a row, it can only identify the top-most branching point of the entire tree structure.

\subsubsection{From 2-by-N to M-by-N}

We can directly extend the 2-by-N inference techniques to the M-by-N
case \cite{merging}. We start from a 2-by-N topology, and add one
source at a time, to connect the 1-by-N's of the remaining $M-2$ sources. Assume
that we have constructed a k-by-N topology, $2 \leq k < M$. To add
the $(k+1)^{th}$ source, we perform $k$ experiments, where at each
experiment one different of the $k$ sources and the $(k+1)^{th}$
source send $x_1$ and $x_2$. We then {\em glue} these topologies together
by following the topological rules of Section~\ref{sec-merging1} (with single-source
trees given) or Section~\ref{sec-merging2} (without that assumption).

\subsubsection{Complexity of Merging}

\begin{lemma}
\label{theorem-minimum2by2s}
If one source's 1-by-N tree topology is given, the minimum number of 2-by-2's required by any merging algorithm to uniquely localize all the joining points (between two branching points) in the 2-by-N topology is $\frac{N}{2}$.
\end{lemma}
\begin{proof}
One can think of checking the types of the 2-by-2 components in the following sense: we divide the $N$ receivers in the network into two sets of vertices, in a bipartite graph, and we draw an edge between any two vertices for which we check the 2-by-2 type. The minimum number of required 2-by-2's is then given by a perfect matching in this bipartite graph; therefore, it is $\frac{N}{2}$.
\end{proof}

\begin{example}
Fig.~\ref{fig-minimumExamples} shows two 2-by-N topologies that require exactly $\frac{N}{2}$ 2-by-2's for their joining points to be uniquely identified by any merging algorithm. In Fig.~\ref{fig-minimumExamples}(a), checking the types of  $(R_1,R_2)$ and $(R_3,R_4)$ is sufficient for localizing all four joining points. In Fig.~\ref{fig-minimumExamples}(b), where all the joining points are the same as $J$, checking the types of  $(R_1,R_3)$ and $(R_2,R_4)$ would be sufficient.
\end{example}

\begin{figure*}[t!]
\centering \subfigure[Example topology 1]{\includegraphics[width=3.5cm,height=4cm]{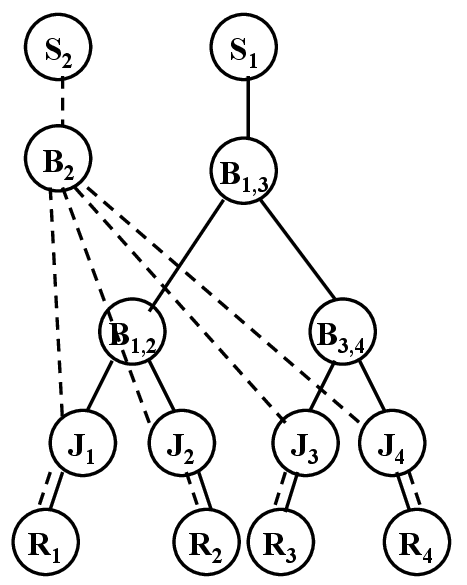}} \hspace{1cm} \subfigure[Example topology 2]{\includegraphics[width=3.5cm,height=4cm]{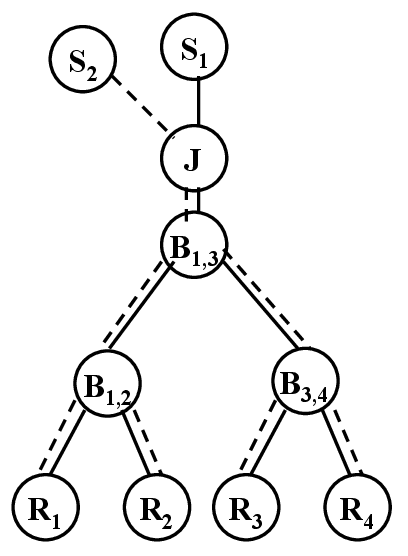}} \caption{Two example 2-by-N topologies that require exactly $\frac{N}{2}$ 2-by-2's for their joining points to be uniquely identified by any merging algorithm.}
\label{fig-minimumExamples}
\end{figure*}

{\em Note on Lemma \ref{theorem-minimum2by2s}:} If the 2-by-2's are properly selected, $\frac{N}{2}$ of them can be sufficient in some topologies, as we see in the examples of Fig.~\ref{fig-minimumExamples}. Unfortunately, we do not know in advance (without knowledge of the 2-by-N topology) which 2-by-2's to choose out of all $\binom{N}{2}$ possible 2-by-2's, so as to uniquely localize the joining points between branching points. Nevertheless, from the given $S_1$ 1-by-N topology, we can give an upper bound on the number of 2-by-2's required. Since every receiver is checked with other receivers that are children of its upper branching points, up to the location of its joining point, we need to check for $O(N\log N)$ 2-by-2's. This is less than identifying all $\binom{N}{2}$ 2-by-2's. Note that we still need to multicast $x_1,x_2$ to all receivers and monitor all observations, but we can use only the observations of the selected 2-by-2's for inference, and ignore the rest.

\begin{lemma}
Algorithm~\ref{alg-merging1} takes at most $O(N\log N)$ steps.
\end{lemma}
\begin{proof}
As mentioned in the note above, Algorithm~\ref{alg-merging1} considers every single receiver and checks the 2-by-2 type of that receiver with other receivers that are children of its upper branching points, up to the location of its joining point. Therefore, it takes at most $O(N\log N)$ steps.
\end{proof}

This is an improvement over $O(N^3)$ in \cite{journal}. Note that the second merging algorithm requires all $\binom{N}{2}$ 2-by-2's; therefore, it takes $\binom{N}{2}$ steps.

\section{\label{sec-simul}Simulation Results}

We now simulate our Inference\footnote{We note that, in both our approach and in past work
\cite{journal,probing}, the error in identifying the 2-by-2's, in the first step, may propagate to the Merging algorithm, in the next step. However, there is no additional error introduced by the Merging
algorithm itself, and thus no need to simulate it.} algorithms in some representative topologies that exemplify different characteristics.

\subsection{Trees}

\subsubsection{Simulation Setup}
Consider the binary tree example of Fig.~\ref{fig-binaryExample}. Assume that all links have the same loss probability $p\in[0,10\%]$. We simulate Algorithm~\ref{alg-binaryLossy} and we send up to $M=10$ probes per iteration. We conservatively consider an error to be {\em any} divergence from the true topology. The results are averaged over $10,000$ realizations of the loss process.

\subsubsection{Simulation Results}
Fig.~\ref{fig-binaryTreeSimul} shows the percentage of inference
errors in each of the first two iterations (shown in Fig.~\ref{fig-binaryExample2} and Fig.~\ref{fig-binaryExample3}) as a function of $p$ and $M$. As expected, the probability of error is increasing with $p$, since packet losses may lead to the misclassification of a leaf to
the incorrect component. For a fixed number of probes per iteration and fixed loss rate $p$, the probability of error decreases with the iterations.
It also decreases rapidly with the number of probes $M$ per iteration: it decreases significantly even with $M=2,3$, even for large $p$, and becomes practically zero for $M\ge5$.\footnote{This second observation is due to the fact that {\em one} correctly received packet is sufficient for the
correct operation of Alg.~\ref{alg-binaryLossy}. {\em E.g.,} if a
node receives a mixture of $x_1$ and $x_2$ probes, it will be correctly
assigned to component $\mathcal{L}_3$ even if some probes are
lost. In contrast, methods that require each receiver to receive enough packets to infer the
loss rate associated with the network links with a certain accuracy, need a larger number of probes for statistical significance.}

\begin{figure}[t!]
\centering \subfigure[Iteration 1 infers the topology in Fig.~\ref{fig-binaryExample2}.]{\includegraphics[width=7.5cm]{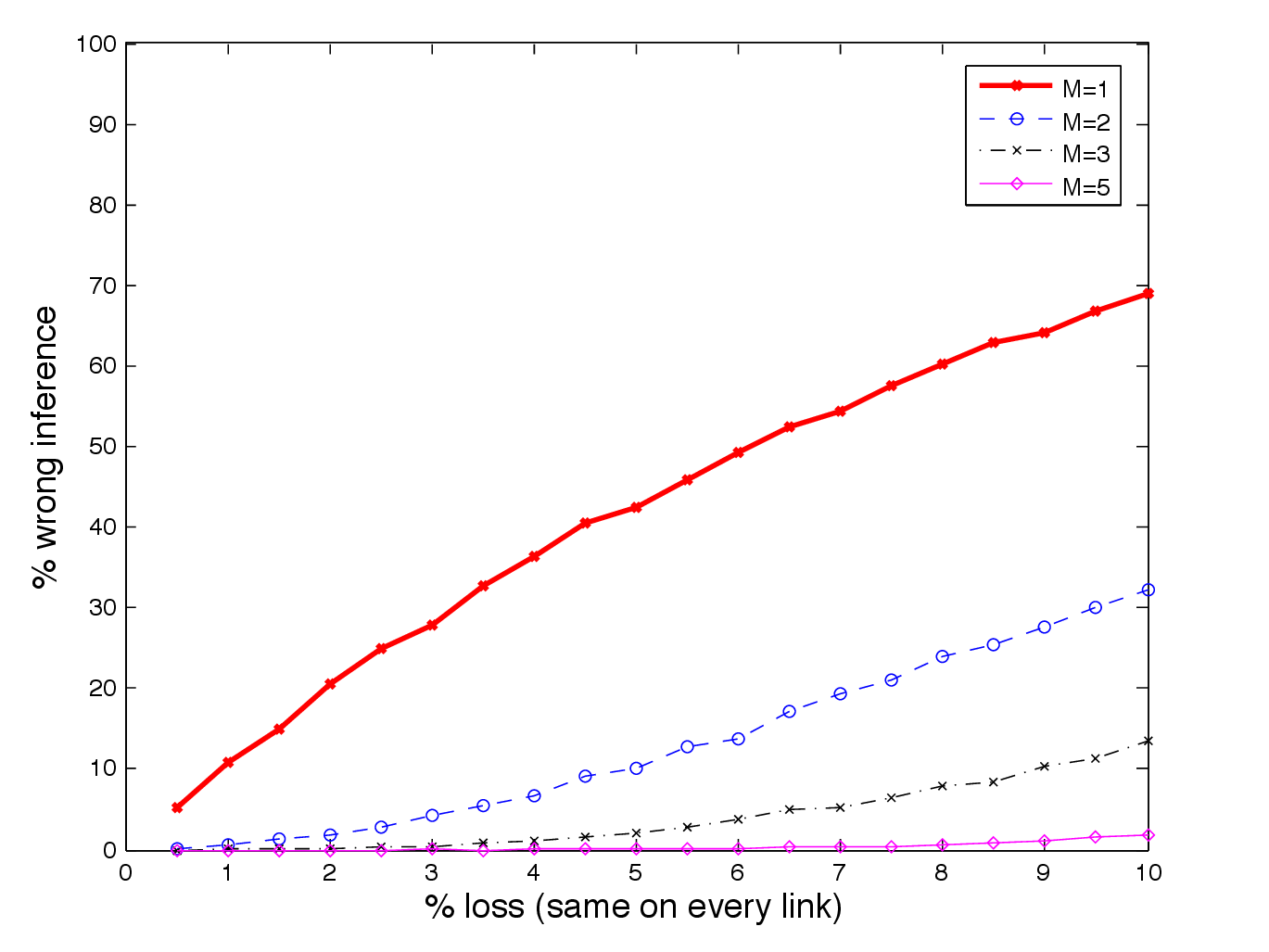}} \subfigure[Iteration 2 infers the topology in Fig.~\ref{fig-binaryExample3}.]{\includegraphics[width=7.5cm]{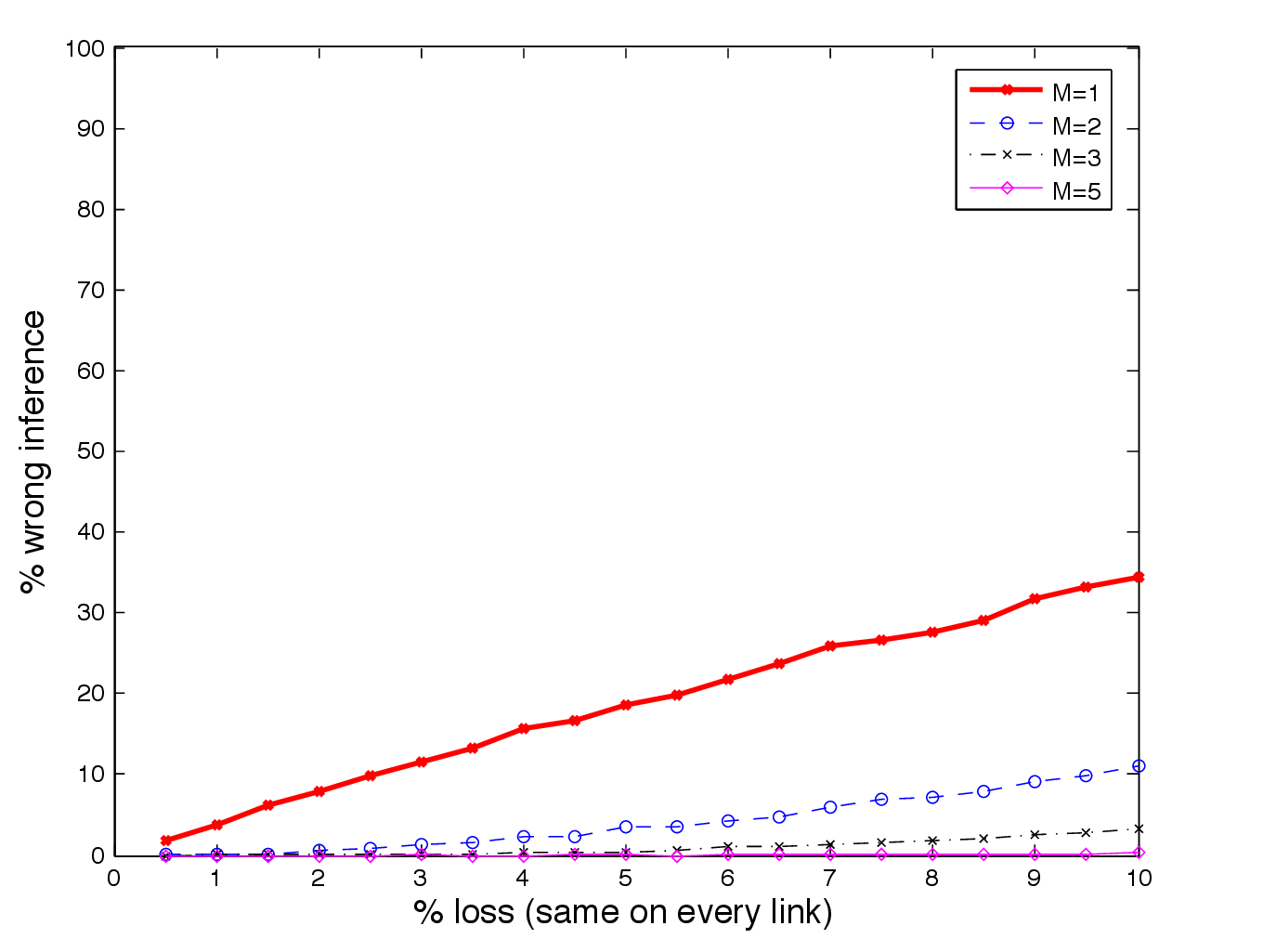}} \caption{\label{fig-binaryTreeSimul} Probability of incorrect inference for the binary tree of Fig.~\ref{fig-binaryExample}, as a function of the loss probability $p$ (same for all links) and of
the number of probes $M$ per iteration. The results are averaged over $10,000$ realizations.}
\vspace{-0.5em}
\end{figure}

\subsection{Multiple-Tree Topologies (DAGs)}
We simulate our algorithms in example multiple-tree networks. In summary, we show that (i) our approach significantly improves over \cite{probing, merging, journal}, in terms of the number of experiments required to identify the type of all 2-by-2's as well as of the associated probability of error;  (ii) the probability of error in identifying the 2-by-2's depends on the underlying topology. In particular, it is smaller for preferential attachment graphs as compared to ER random graphs.

\subsubsection{Simulation Setup}
\label{sec-generalSimulSetup}
 To demonstrate (i), we consider Fig.~\ref{fig-threeTopologies}(a), which shows the Abilene topology \cite{abilene}, with two sources located at the Chicago and Indiana nodes, and nine
receivers, each located at one of the other core network nodes. This is the same topology considered in \cite{journal}. To investigate (ii), we consider Fig.~\ref{fig-threeTopologies}(b) and Fig.~\ref{fig-threeTopologies}(c). Fig.~\ref{fig-threeTopologies}(b) shows a random topology with 2 sources and 7 receivers generated by LEDA \cite{leda}, which can be used to model wireless multi-hop topologies.
Fig.~\ref{fig-threeTopologies}(c) shows a preferential attachment topology generated by Brite \cite{brite}. We pick 2 sources and 8 receivers and we select the route for every source-destination pair, according to our assumptions in Section \ref{sec-statement}. 

We run Alg.~\ref{alg-generalLossless} and Alg.~\ref{alg-generalLossy} in the absence and presence of packet loss, respectively, and we compute the error. In the lossless case, we identify the 2-by-2 types and we report the error as a function of the number of experiments $countMax$. The only possible error in this case is to falsely declare type 4 as type 1. In the lossy case, we also report the error assuming that there is packet loss in the network (with probability $p$ independently on every link), and after applying Alg.~\ref{alg-generalLossy} to each topology. An error in this case can result either from declaring type 2 or 3 or 4 as type 1; or from declaring type 4 as type 2 or 3. We consider values of $p \in [0,10\%]$ and $countMax=100,200,250$.

We assume that individual link delays have a fixed part of 5-10ms (propagation delay) and a variable part, which is exponential with mean 2ms (queueing delay). We choose a large time window $W=100ms$. The offset $u$ is drawn uniformly at random from $[70,100]ms$, \ie $f=0.7$.

\subsubsection{Simulation Results}

Fig. \ref{fig-generalSimulLossless} reports the results for the lossless case for all three topologies in Fig. \ref{fig-threeTopologies}. We also report the results for the lossy case: only one topology is shown in Fig.~\ref{fig-generalSimul}(b) due to the lack of space; additional figures can be found in the technical report \cite{active-topology-arxiv}.

\begin{figure*}[t!]
\centering \subfigure[The Abilene topology \cite{abilene}.]{\includegraphics[scale=0.28]{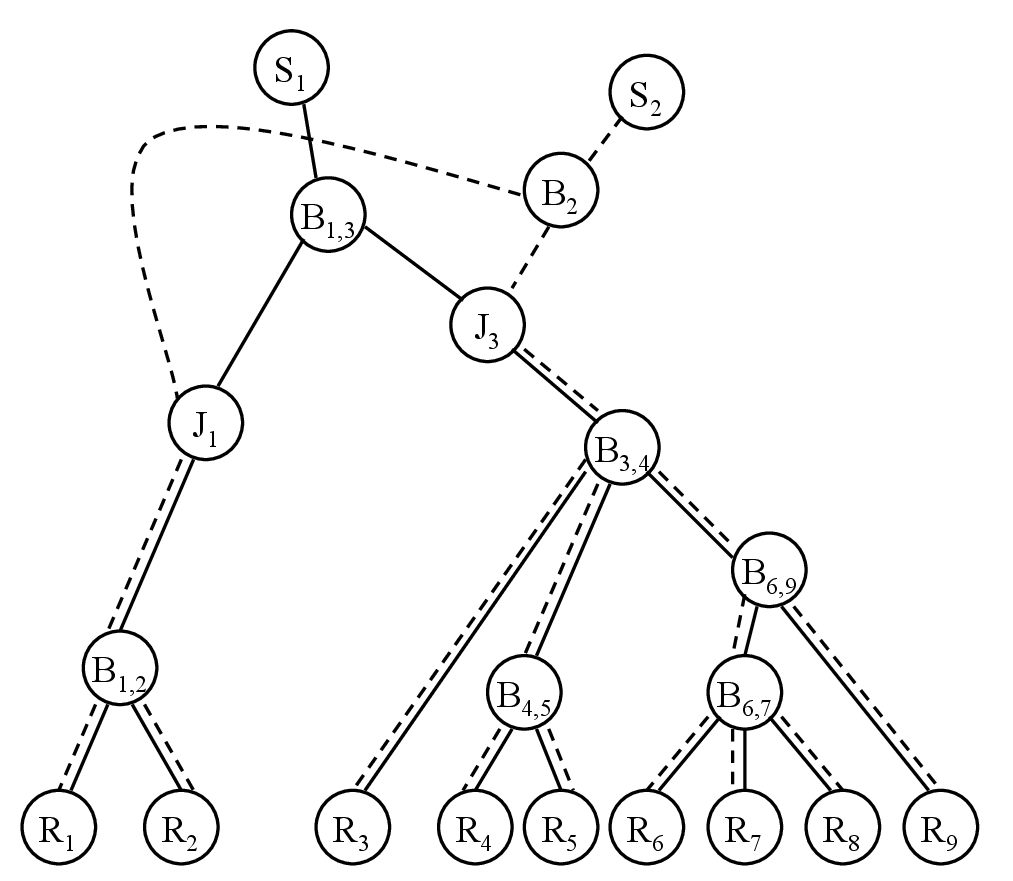}\label{fig-abileneTopology}} \hspace{0.3cm} \subfigure[An Erdos-Renyi random graph.]{\includegraphics[scale=0.31]{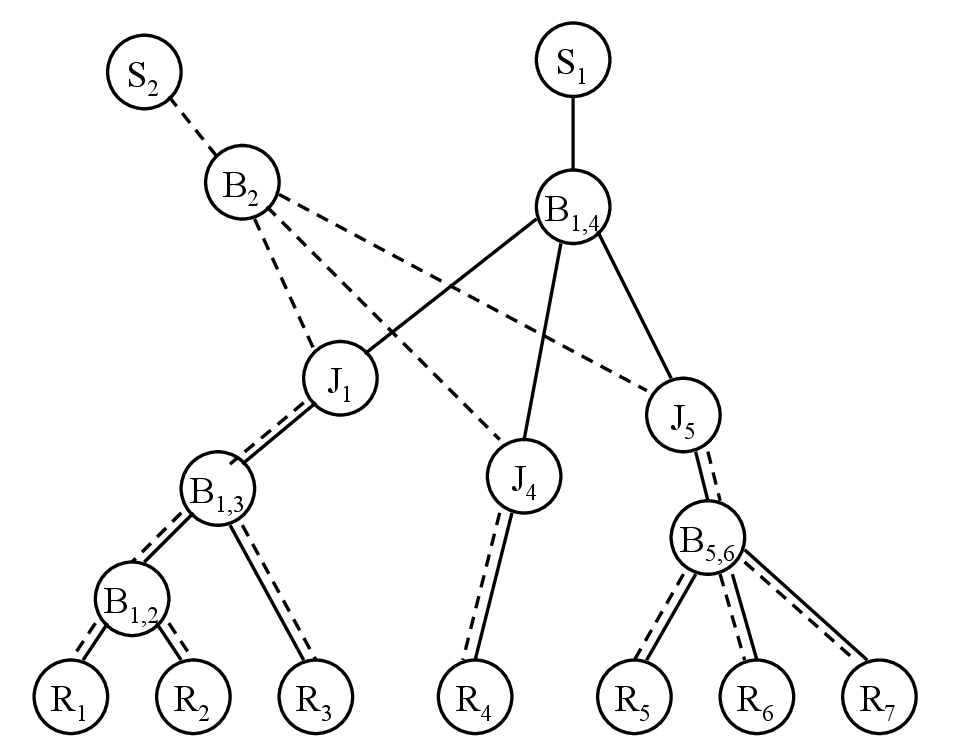}} \hspace{0.3cm} \subfigure[A preferential attachment graph.]{\includegraphics[scale=0.3]{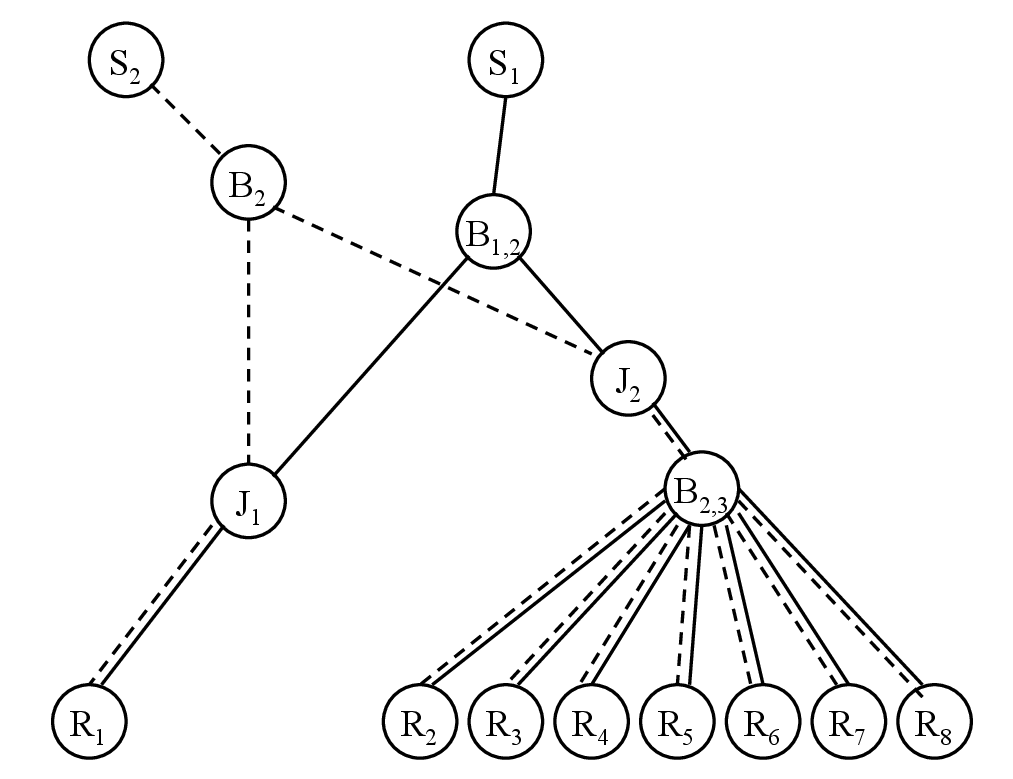}} \caption{\label{fig-threeTopologies} Three different topologies used to test our inference algorithms in simulation.}
\end{figure*}

\begin{figure}[t]
\centering \subfigure[Lossless case. Probability of error vs. the number of experiments
for the three topologies in Fig.~\ref{fig-threeTopologies}. The results are averaged over 1000 realizations.]{\includegraphics[width=7.6cm]{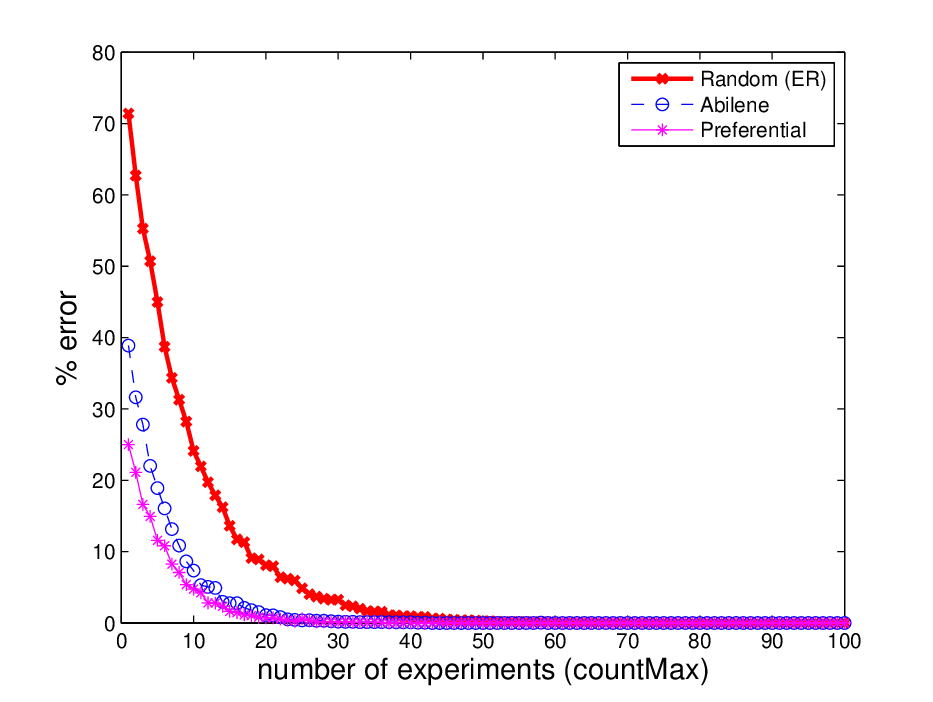}\label{fig-generalSimulLossless}} \hspace{0.5cm} \subfigure[Lossy case. The probability of error vs. the loss rate for different $countMax$ values for the Abilene topology in Fig.~\ref{fig-threeTopologies}(a); 5000 realizations.]{\includegraphics[width=7.5cm]{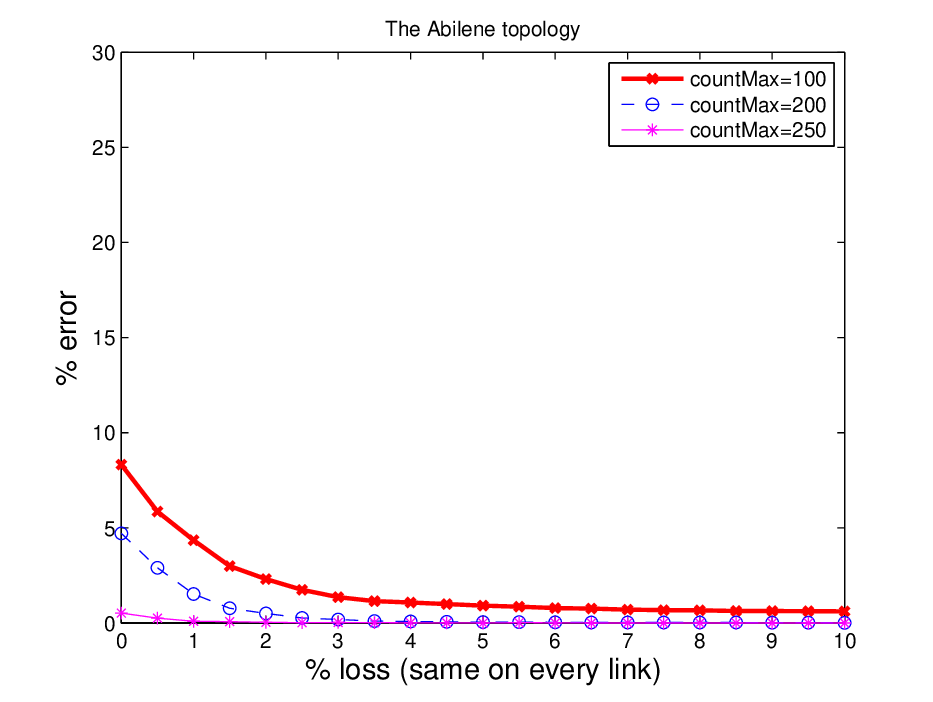}\label{fig-lossyAbilene}} \caption{\label{fig-generalSimul} Simulation results both in the absence and in the presence of loss for the topologies in Fig.~\ref{fig-threeTopologies}.}
\end{figure}

We first discuss the {\em Abilene topology} in Fig. \ref{fig-threeTopologies}(a). In the lossless case, the Abilene curve in Fig. \ref{fig-generalSimulLossless} shows that the error probability decreases very rapidly with $countmax$ and reaches 0 at $countMax\simeq50$. In the lossy case shown in Fig.~\ref{fig-lossyAbilene}, the error probability also decreases rapidly with $countMax$; it is negligible with $200-250$ experiments. This is a significant improvement over \cite{journal} for the same example topology: they use 1000 measurements to distinguish only between type 1 and the other types, for very small loss rates up to $1.5\%$, and they achieve error probability 10$\%$. In contrast, with an order of magnitude less probes, we distinguish among all four types, and we have a very small error probability for larger loss rates (up to 10\%). Note that the error probability is decreasing with the loss rate because loss actually helps to create observations of group (iii) \cite{active-topology-arxiv}.

We now consider {\em random graphs}, in particular Erdos-Renyi (ER) vs. Preferential Attachment, as shown in Fig.~\ref{fig-threeTopologies}(b), \ref{fig-threeTopologies}(c). In the lossless case, Fig.~\ref{fig-generalSimulLossless} shows that the error probability decreases rapidly with $countMax$ and reaches 0 at $countMax\simeq50$. We also observe that for the same number of experiments, the error is generally smaller  in the topology generated using the preferential attachment rather than the ER model. This is true in both lossless and lossy cases \cite{active-topology-arxiv}.\footnote{The reason is that in the preferential attachment topologies, we have a few nodes with a very high degree and many nodes with a low degree. As a result, we have a large number of receivers with a shared joining point and some other receivers with distinct joining points. In contrast, in ER graphs, we have several roughly equally-sized groups of shared receivers, where each group forms a non-shared type with any other group. Therefore, we have more 2-by-2's of type 1 in preferential attachment graphs, which results in a smaller error in Alg.~\ref{alg-generalLossless} and Alg.~\ref{alg-generalLossy}.}

\section{Our Work in Perspective} \label{sec-connections}

First, we discuss possible deployment scenarios in Section \ref{sec-deployment}. Then, in Sections \ref{sec-traditional}-\ref{sec-traceback}, we revisit the related work, briefly outlined in Section \ref{sec-related}, but now focusing on the most closely related parts. We discuss the trade-offs involved in each approach and we identify  connections and differences between our approach and each of the alternative approaches.

\subsection{Deployment Scenarios}
\label{sec-deployment}

In practice, our scheme can be deployed in two ways: 
\begin{itemize}\addtolength{\itemsep}{-.35\baselineskip} 
\item {\em Active probing scenario:} special probes are sent by the sources, recognized and coded/multicast at intermediate nodes, received by the receivers, sent and processed at a fusion center. 
\item {\em Packet marking scenario:} a special header is reserved on regular packets, marking and coding occurs only  on this header, without coding the data in the packet. The rest of the ideas in this paper apply in the same way, but now considering the content of the headers as opposed to the content of the payload.
\end{itemize}
The first scenario is the one already described throughout the paper. The sources send special probe packets, whose sole purpose is to be received and processed for inference purposes. The intermediate nodes perform simple network coding operations on these packets, to allow for more accurate and efficient inference. We are going to compare the first scenario to alternative topology inference approaches in Sections~\ref{sec-traditional}, \ref{sec-passive}, \ref{sec-randomNC} and \ref{sec-traceroute}. In the second scenario, our scheme could also be implemented as passive on top of regular traffic. A 2-by-2 component consists of  four regular unicast flows between the two sources and the two receivers,  intermediate nodes mark a special field in the headers of these regular packets. The special field in the headers can be coded (by constructing a mark that is the sum of the marks  from two flows going through the intermediate node), or multicast (by marking several outgoing flows with the mark found on the header of one incoming flow).  However, all the ``action'' is now on the headers, while the actual data can remain uncoded. In this respect, this deployment scenario is similar to packet marking techniques, which we describe in detail in Section~\ref{sec-traceback}.

Beyond the deployment scenario and practical applicability, we believe that our work provides a fundamental building block for exploiting correlation in the content of network coded packets for inference. It is already applicable to networks with network coding (which can be applied to dedicated probes sent over the network, as in the first scenario above) or packet marking capabilities (which can be applied only on the headers of packets from regular unicast flows, as in the second scenario above), or can be used as the basis for designing more ``practical'' schemes that approximate its functionality. For example, a large portion of the tomography literature is based on multicast, which has not been deployed in the Internet either,  thus one can say that it was not ``practical''. Nevertheless, multicast tomography showed how to exploit fundamental properties of topology-dependent correlation in multicast traffic for inference. Later on, unicast tomography used unicast probes to develop more ``practical'' schemes that mimicked and approximated the multicast tomography. Similarly, we believe that our work provides a fundamental building block for exploiting correlation in the content of network coded packets for inference. In our context, network coding can be thought of as ``reverse multicast'' that can be exploited for inference.

\subsection{Comparison to Traditional Tomography w/o Network Coding}
\label{sec-traditional}

Within the large literature on network tomography, the most closely related work is the Multiple Source Network Tomography in \cite{journal, probing, merging}, which formally defines M-by-N tomography problem. Our work on DAGs builds on \cite{journal, merging}: we follow their approach for decomposing the M-by-N into a number of 2-by-2 components, inferring the type of each 2-by-2 and then merging them together to reconstruct the M-by-N. Using simple network coding operations at intermediate nodes provides a graceful way to reveal coding points, which has been typically a challenge in traditional tomography.  Our work improves upon \cite{journal,merging} in that: (i) it can {\em exactly} identify the 2-by-2 type, as opposed to just distinguish between shared and non-shared types; and (ii) the merging algorithms can precisely locate the joining points with
respect to the branching points, as opposed to provide bounds.

Simulation results in Section \ref{sec-simul} on the same topology used in \cite{journal}, showed that our approach is more accurate, with less experiments. In essence, our approach is deterministic (one observation suffices to distinguish among types) as opposed to probabilistic (which needs to collect
a large number of probes for statistical significance). This benefit comes at the cost of having
intermediate nodes do some operations. However, these operations are so simple (just additions), that can be simply thought of as inverse multicast. This cost can be removed, if our approach is implemented as passive on top of random network coding, as outlined in Section \ref{sec-randomNC}.

\subsection{Comparison to Passive Tomography with Network Coding}
\label{sec-passive}

Recently, a passive approach for topology inference on top of random network coding has been proposed in \cite{jaggi}. Probes are sent once, and intermediate nodes pick coding coefficients $\beta$ uniformly at random out of a large field $\mathbf{F}_q$. The key idea is that, under assumptions of strong connectivity and large enough finite field, $\mathbf{F}_q$, the transfer matrix $M$, from the sender to the receiver, is distinct for different networks, w.h.p. Then, using the observations $Y$ at the receiver and the source messages $X$, exhaustive enumeration of all possible topologies is used to find an $M$ that matches $Y=MX$. An extended version of this work to erroneous networks is provided in \cite{jaggi-journal}, where different (ergodic or adversarial) failures lead to different transfer functions. Our approach is different in that it is active and uses several probes but simple coding operations over a small field.

\begin{example}
To better illustrate the differences, we consider a 2-by-2 topology, and we try to infer its type using the two approaches. The transfer matrices corresponding to the four 2-by-2 types of Fig.~\ref{fig-2by2} are provided in Fig. \ref{fig:equation}. The approach in \cite{jaggi, jaggi-journal} tries to distinguish among these four $M$'s in a single experiment. In contrast, we send probe packets in multiple rounds. In each experiment, $\beta$'s are either 0 or 1 (since we do additions only\footnote{In a joining point $J$, the $\beta$ from an incoming link to the outgoing link is 1 if the packet arrives at $J$ within $W$, and 0 otherwise. In a branching point $B$, all $\beta$'s are 1 unless there is loss, which makes $\beta$'s 0 in both $J$'s and $B$'s.}). We exclude some of the possible topologies in each experiment, until we are left with only one unique topology, in at most $countMax$ experiments.
\end{example}

\begin{figure*}[t!]
\footnotesize 
\vspace*{4pt} \hrulefill
\begin{displaymath} \label{eq_M1}
M_1=\left( \begin{array}{cc}
\beta_{S_1J,JB}\cdot\beta_{JB,BR_1} & \beta_{S_1J,JB}\cdot\beta_{JB,BR_2}\\
\beta_{S_2J,JB}\cdot\beta_{JB,BR_1} & \beta_{S_2J,JB}\cdot\beta_{JB,BR_2}\\
\end{array} \right)
\end{displaymath}
\begin{displaymath} \label{eq_M2}
M_2=\left( \begin{array}{ll}
\beta_{S_1J_1,J_1B_1}\cdot\beta_{J_1B_1,B_1R_1} & \beta_{S_1J_1,J_1B_1}\cdot\beta_{J_1B_1,B_1J_2}\cdot\beta_{B_1J_2,J_2R_2}\\
\beta_{S_2B_2,B_2J_1}\cdot\beta_{B_2J_1,J_1B_1}\cdot\beta_{J_1B_1,B_1R_1} & \beta_{S_2B_2,B_2J_2}\cdot\beta_{B_2J_2,J_2R_2}+\beta_{S_2B_2,B_2J_1}\cdot\beta_{B_2J_1,J_1B_1}\cdot\beta_{J_1B_1,B_1J_2}\cdot\beta_{B_1J_2,J_2R_2}\\
\end{array} \right)
\end{displaymath}
\begin{displaymath} \label{eq_M3}
M_3=\left( \begin{array}{ll}
\beta_{S_1J_2,J_2B_1}\cdot\beta_{J_2B_1,B_1J_1}\cdot\beta_{B_1J_1,J_1R_1} & \beta_{S_1J_2,J_2B_1}\cdot\beta_{J_2B_1,B_1R_2}\\
\beta_{S_2B_2,B_2J_1}\cdot\beta_{B_2J_1,J_1R_1}+\beta_{S_2B_2,B_2J_2}\cdot\beta_{B_2J_2,J_2B_1}\cdot\beta_{J_2B_1,B_1J_1}\cdot\beta_{B_1J_1,J_1R_1} & \beta_{S_2B_2,B_2J_2}\cdot\beta_{B_2J_2,J_2B_1}\cdot\beta_{J_2B_1,B_1R_2}\\
\end{array} \right)
\end{displaymath}
\begin{displaymath} \label{eq_M4}
M_4=\left( \begin{array}{cc}
\beta_{S_1B_1,B_1J_1}\cdot\beta_{B_1J_1,J_1R_1} & \beta_{S_1B_1,B_1J_2}\cdot\beta_{B_1J_2,J_2R_2}\\
\beta_{S_2B_2,B_2J_1}\cdot\beta_{B_2J_1,J_1R_1} & \beta_{S_2B_2,B_2J_2}\cdot\beta_{B_2J_2,J_2R_2}\\
\end{array} \right)
\end{displaymath}
\hrulefill
\caption{Comparison of our approach to \cite{jaggi,jaggi-journal} in the example of a 2-by-2 topology. $M_1$, $M_2$, $M_3$, and $M_4$ are the transfer matrices resulting from the four different types (types 1, 2, 3, and 4, respectively, in Fig. \ref{fig-2by2}) of a 2-by-2, if intermediate nodes use coding coefficients $\beta$. The approach in \cite{jaggi, jaggi-journal} tries to distinguish among these four $M$'s in a single experiment. In contrast, we use $\beta$'s either 0 or 1 and multiple experiments to choose an $M$.}\label{fig:equation}
\end{figure*}

Our $countMax$ experiments can be thought of as collecting observations $Y_1=M_1X, Y_2=M_2X,\cdots, Y_{countMax}=M_{countMax}X$, where $M_1,M_2,\cdots,M_{countMax}$ are different representations of the unique $M$. Note that, although $M$ is unique in terms of $\beta$'s for each topology, it can be shown to be non-unique when these $\beta$'s are replaced by $0/1$ values. For example in Fig. \ref{fig:equation}, the transfer matrices of types 1 and 4 would look similar if all $\beta$'s are equal to 1, but only type 4 can potentially result in $M=[1,1;0,1]$, while type 1 cannot. We send probes in multiple experiments to create those representations of $M$ that help us uniquely identify the underlying topology.

In terms of the field size, \cite{jaggi, jaggi-journal} require a larger field than us, to obtain distinct transfer matrices for different topologies.\footnote{\cite{jaggi} shows that if local coding variables are i.i.d uniform random variables over $\mathbf{F}_q$, then the probability that all
different unicast networks with at most $|V|$ nodes and $|E|$ edges have distinct transfer matrices is
$\ge 1-|V|^{4|E|}(1-(1-\frac{1}{q})^{|V|})$. This shows that: (i) the success probability $\rightarrow1$ iff $q\rightarrow\infty$; (ii) $q$ needs to increase rapidly as the network size grows. For example, our approach described in Section~\ref{sec-2by2} requires a small field $\mathbf{F}_{3}$ to distinguish among different 2-by-2's. While one can calculate that if $\beta$'s are chosen uniformly at random out of $\mathbf{F}_{3}$, then $Pr(M_4=M_1)\cong0.04$.}
In terms of bandwidth, we use small packets in each experiment, since we perform operations over a small field, and a few experiments are required. On the other hand in \cite{jaggi}, the coefficients, sent anyway along with packets through the network, are used to reveal the topology from the transfer matrix at the receiver, and can be thought of as the equivalent of probes. The distinction between active and passive approaches becomes even less pronounced, if we consider that \cite{jaggi, jaggi-journal} require the receiver to have a-priori knowledge of the size of the network, and of the code-book used at each node (referred to as {\em common randomness}), which depends on the node id \cite{jaggi-journal}. We do not require such knowledge and we infer the
topology with smaller complexity. 
Further comparison of our  approach to \cite{jaggi-journal} for larger M-by-N topologies can be found in \cite{active-topology-arxiv}.

\subsection{Extension to Passive Tomography with RNC} \label{sec-randomNC}

We now discuss how our approach can potentially be extended to be implemented
as passive, when random network coding (RNC) is used in the middle. Intuitively, the same topology inference algorithms should apply if we ensure that RNC coefficients satisfy necessary conditions for inference.

Assume that in each experiment, the intermediate nodes perform random linear network
coding operations instead of the simple additions assumed so far. In this case, Alg.~\ref{alg-generalLossy} will still work if we assign coding coefficients to the joining points in a partial order, so as to ensure that the minimum coding coefficient of a joining point is always greater than or equal to
the maximum coding coefficient of its ancestor joining point. Under this condition, we can prove that the same rationale as in Section~\ref{sec-2by2Lossy} still holds, \ie type 1 results in similar observations; type 2 results in $c_{12}-c_{22}\leq0$; type 3 results in $c_{12}-c_{22}\geq0$; and type 4 results in $c_{12}-c_{22}\leq0$ or $c_{12}-c_{22}\geq0$. The proof can be found in the technical report \cite{active-topology-arxiv}. Code design to jointly meet both random network coding goals (large enough field for independent linear equations) and tomographic goals (the aforementioned condition) is part of future work. If such a code design is possible, Algorithm \ref{alg-generalLossy} can be directly applied to the case of random network coding. An example can be found in \cite{active-topology-arxiv}.

\subsection{Comparison to {\tt traceroute}-like Approaches}\label{sec-traceroute}

In practice, the dominant approach to Internet mapping is based on {\tt traceroute} \cite{govindan,cheswick,rocketfuel,yao,clauset,dallasta,skitter,doubletree,paris,DIMES}.
It uses {\tt traceroute}'s sent between selected nodes and collects
the ids of the nodes along the paths traversed. It faces the challenges of (i) resolving anonymous routers and router aliases and (ii) causing congestion close to the monitoring points {\cite{doubletree}.

Similarly to {\tt traceroute}, we also use active end-to-end probes and we require some minimal co-operation from internal nodes (simple additions in our case vs. traceroute-specific responses). However, unlike {\tt traceroute}, we do not ask intermediate nodes to reveal their node id, which has the advantage of preserving the anonymity of intermediate nodes. A design difference was also noted in Section \ref{sec-otherStructures}: we infer 2-by-2 components, instead of 1-by-1's (paths) for {\tt traceroute}.

In terms of measurement bandwidth, our approach uses exactly one probe per link per experiment, which is the minimum possible. This is thanks to network coding that combines multiple incoming packets into one, and thanks to multicast that replicates a single incoming packet into many outgoings, thus eliminating overlap. On the other hand, standard {\tt traceroute} implementations, \eg {\em skitter} \cite{skitter}, send three probe packets for each hop count in every single path.\footnote{As an example, the average number of packets/link required by both standard {\tt traceroute} \cite{cheswick} and our approach to identify the 2-by-2's is as follows: in type 2 or 3, it is 12 for {\tt traceroute} and 1 for our approach. In type 1, it is 14.4, and in type 4, it is 9, for {\tt traceroute}, while both are $countMax$ for our approach. The benefit of our approach in terms of the number of required packets depends on the topology. Also, we can trade-off accuracy for the load by adjusting $countMax$.
On the other hand, {\em Doubletree} \cite{doubletree} can reduce the {\tt traceroute} overhead to some extent.} For example, our approach reduces the {\em number of measured paths} in a 2-by-N topology by a factor of two, compared to {\tt traceroute}; {\em i.e.,} we require $O(N)$ instead of $O(2N)$ measurements, since each coded packet observes two paths.




In terms of the amount of information per probe, we can think of our approach as {\tt traceroute} probes that only report the joining points in the topology; while {\tt traceroute} reports all the node ids, some of them being duplicate. Therefore, the amount of information per probe is significantly decreased in our approach. In essence, we only report the minimum required information, which we prove to be sufficient for reconstructing the topology.

\subsection{Comparison to  Packet Marking Approaches}
\label{sec-traceback}

As we mentioned in the previous section, {\tt traceroute} sends many rounds of probes, one for each hop towards the destination \cite{govindan,cheswick,rocketfuel,yao,clauset,dallasta,skitter,doubletree,paris,DIMES}. Therefore, our approach is not directly comparable to {\tt traceroute}. On the other hand, as we described in Section~\ref{sec-deployment}, our scheme can also be implemented as a packet marking scheme on top of regular unicast flows. Indeed, the most comparable approaches to our work are the packet marking schemes, \eg {\em traceback} schemes \cite{savage}, which aim at identifying the source(s) of a sequence of packets and the nodes these packets have traversed. This is useful for tracing the source(s) of high volume traffic, \eg in Distributed Denial-of-Service attacks \cite{savage}. In packet marking schemes, intermediate nodes (probabilistically) mark packets with information about their identity, and the receiver uses the information from several packets to reconstruct the path(s) they have traversed. Packet marking schemes have been mostly used in a single-path or a reverse tree topology, rooted on the victim of  the attack. Therefore, they identify an M-by-1 topology, while we identify a more general M-by-N topology.

On the other hand, traceback needs to reconstruct not only the node ids, but also the order in which they are traversed by the packets, \ie the attack paths and graph. In single-path scenarios, this can be done by adding a distance value to the marking field. However, in multi-path scenarios, this has been a challenge since there exist multiple nodes at the same distance from the receiver. Many approaches have been proposed to overcome this limitation, and most of them suffer from practical issues in terms of the amount of space required in the packet header for marking. In contrast, multi-path scenarios are the strength of our approach, which identifies the joining points using network coding.

\section{Conclusion}
\label{sec-conclusion}

In this paper, we designed active probing schemes that exploit simple operations at intermediate nodes to accurately infer the topology, based on end-to-end observations. We designed algorithms for trees and general topologies, and we simulated them in representative examples. 
Our schemes build on the work by Rabbat {\em et al.} \cite{journal,probing,merging} and extend it when joining points perform network coding operations. Furthermore, we make connections with several alternative approaches, including passive inference, {\tt traceroute}, and packet marking.  Our main contribution is that we show how to exploit the fundamental connection between network coding and topology, and thus adding one new building block in the, already large, space of available options for topology inference. We expect the techniques developed in this paper to be most useful in networks that can perform simple network coding operations, including but not limited to wireless multi-hop networks.

\section*{Acknowledgment}
We would like to thank Prof. Animashree Anandkumar at UC Irvine for useful discussion on the performance analysis of Algorithm~\ref{alg-generalLossy} and on the complexity of merging algorithms.

\appendix
\section{Probability of Error of Alg.~\ref{alg-binaryLossy} (Topology Inference in Trees with Loss)}
\label{appendix-Alg2}

In this Appendix, we consider the topology inference algorithm for tree networks in the presence of packet loss (\ie Alg.~\ref{alg-binaryLossy}), and we derive the exact probability of error of its first iteration in inferring the binary tree topology of Fig.~\ref{fig-binaryExample}. Assume that all links have the same loss rate $p$. Let $X_i$, $i=2,3,4,5,6$, be the number of experiments required for receiver $i$ to observe one correct packet in the first iteration, as described in Example~\ref{ex:binarytree}; \eg$X_2$ is the number of experiments required for receiver 2 to observe $x_1$, $X_3$ is the number of experiments required for receiver $3$ to observe $x_1+x_2$, etc. $X_i$ is a geometric random variable with probability of success $\rho_{i}$, as follows: 


\begin{equation}
\rho_{2}={(1-p)}^2 \, , \, \rho_{5}=\rho_{6}={(1-p)}^3 \, , \, \rho_{3}=\rho_{4}={(1-p)}^6
\end{equation}

If we denote the total number of required experiments for the first iteration of Alg.~\ref{alg-binaryLossy} by $X$, then we have that: $X=max_{i\in\{2,3,4,5,6\}}{X_i}$. Therefore, we can compute the Chebyshev bound on $X$. On the other hand, in $M$ experiments, we can also compute the probability of error as follows:

\begin{equation}
\label{eq-errorExact-Alg2}
Pr(error) = 1- Pr(X_2\leq M \& X_3\leq M \& X_4\leq M \& X_5\leq M \& X_6\leq M)
\end{equation}

For general $M$, Eq.(\ref{eq-errorExact-Alg2}) becomes complicated to compute. Here we only compute it for $M=1$:

\begin{equation}
\label{eq-errorExact2-Alg2}
\begin{split}
Pr(error) &= 1-[Pr(X_2=1)Pr(X_5=1|X_2=1) Pr(X_6=1|X_2=1,X_5=1) \\
&\quad Pr(X_3=1|X_2=1,X_5=1,X_6=1) Pr(X_4=1|X_2=1,X_5=1,X_6=1,X_3=1)]\\
&\quad = 1- {(1-p)}^2 {(1-p)}^3 {(1-p)} {(1-p)}^4 {(1-p)}
\end{split}
\end{equation}

For example, for $M=1$ and $p=10\%$, Eq.(\ref{eq-errorExact2-Alg2}) results in $Pr(error)=0.69$, which agrees with the simulation results in Fig.~\ref{fig-binaryTreeSimul}(a). We can also provide an upper bound on $Pr(error)$ using the union bound, as follows:

\begin{equation}
\label{eq-errorUnionBound-Alg2}
Pr(error) \leq \sum_{i\in\{2,3,4,5,6\}} Pr(X_i>M) = (1-{(1-p)}^2)^M+2(1-{(1-p)}^3)^M+2(1-{(1-p)}^6)^M
\end{equation}

For example, for $M=5$ and $p=10\%$, Eq.(\ref{eq-errorUnionBound-Alg2}) results in $Pr(error)\leq 0.05$, which is an upper bound on the probability of error that we observe in Fig.~\ref{fig-binaryTreeSimul}(a). Similar analysis can be used for the second iteration and for any other tree topology.

\section{Prob. of Error of Alg.~\ref{alg-generalLossy} (Inference of 2-by-2's in DAGs with loss)}
\label{appendix-Alg6}

In this Appendix, we consider the topology inference algorithm for DAGs in the presence of packet loss (\ie Alg.~\ref{alg-generalLossy}), and we derive the probability of error of Alg.~\ref{alg-generalLossy}, which we stated in Lemma~\ref{thm-Thm3}. Assume that all links have the same loss rate $p$. As we mentioned in Section~\ref{sec-generalSimulSetup}, an error in Alg.~\ref{alg-generalLossy} can result either from declaring type 4 as type 2 or 3, or from declaring type 2 or 3 or 4 as type 1. In $countMax$ experiments, the probability of error is maximized when the underlying topology is of type 4. In fact, a type 4 topology can give an upper bound on $countMax$, as it requires the maximum number of experiments to be identified correctly: it requires both a $c_{12}-c_{22}>0$ observation and a $c_{12}-c_{22}<0$ observation. 

Let $Y_1$ be the number of experiments required to obtain a $c_{12}-c_{22}>0$ observation and let $Y_2$ be the number of experiments required to obtain a $c_{12}-c_{22}<0$ observation. Both $Y_1$ and $Y_2$ are geometric random variables with success probabilities $\rho'_{1}$ and $\rho'_{2}$, respectively. In a type 4 topology, from Table~\ref{table-lossyObs}, we have that:

\begin{equation}
\rho'_{1} = Pr(x_1+x_2,x_1) + Pr(x_2,x_1)
\end{equation}

One can compute each of these probabilities by considering the loss events and the late arrival events (at joining points) that can cause such observations in type 4. If we denote the probability that packet $x_2$ arrives at joining point $J_i$ within $W$ by $Pr(u+D_i\leq W)=\gamma_i$, $i=1,2$, and we also use the notation $Pr(u+D_i>W)=\overline \gamma_i$, then we have that:

\begin{equation}
Pr(x_1+x_2,x_1) = p{(1-p)}^7 \gamma_1+{(1-p)}^8 \overline \gamma_2 \gamma_1 = {(1-p)}^7 \gamma_1 (p+(1-p) \overline \gamma_2)
\end{equation}

Also:

\begin{equation}
Pr(x_2,x_1) =p^2 {(1-p)}^6 + {(1-p)}^7 \overline \gamma_2 p = p{(1-p)}^6 (p+(1-p) {\overline \gamma}_2) 
\end{equation}

Therefore, we have that:

\begin{equation}
\rho'_{1} = {(1-p)}^6 (p+(1-p) \gamma_1) (p+(1-p) \overline \gamma_2) 
\end{equation}

Similarly, one can compute that:

\begin{equation}
\rho'_{2} = {(1-p)}^6 (p+(1-p) \gamma_2) (p+(1-p) {\overline \gamma}_1)
\end{equation}

If we denote the total number of required experiments for Alg.~\ref{alg-generalLossy} by $Y$, then we can estimate $Y$ by $Y\leq Y_1+Y_2$. If the link delay distribution and the link loss rates are given, we can use the above relationships to find the Chebyshev bound on $Y$. On the other hand, in $countMax$ experiments, we can also compute the probability of error as follows:

\begin{equation}
Pr(error) = 1- Pr(Y_1\leq countMax \& Y_2\leq countMax)
\end{equation}

We can assume that the events $Y_1\leq countMax$ and $Y_2\leq countMax$ are independent. Thus:

\begin{equation}
Pr(error) = 1- \sum_{i=0}^{countMax-1} (1-\rho'_{1})^i \rho'_{1} \sum_{j=0}^{countMax-1} (1-\rho'_{2})^j \rho'_{2}
\end{equation}

We can also provide an upper bound on $Pr(error)$ using the union bound, as follows:

\begin{equation}
\label{eq-errorUnionBound-Alg6}
Pr(error)\leq Pr(Y_1>countMax) + Pr(Y_2>countMax) = (1-\rho'_{1})^{countMax} + (1-\rho'_{2})^{countMax}
\end{equation} 

Therefore, the proof of Lemma~\ref{thm-Thm3} is complete.
\hfill{$\square$}





\parpic{\includegraphics[width=1in,clip,keepaspectratio]{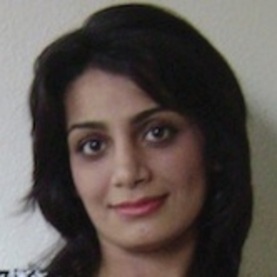}}
\noindent {\bf Pegah Sattari} (SM'08) received the B.S. degree in Electrical Engineering from Sharif University of Technology, Tehran, Iran, in 2006, and the M.S. degree in Electrical and Computer Engineering from the University of California, Irvine, in 2007. She is currently a Ph.D. candidate in the EECS Department at the University of California, Irvine. Her research interests include network measurements, network coding, and its applications to inference problems. 

\parpic{\includegraphics[width=1in,clip,keepaspectratio]{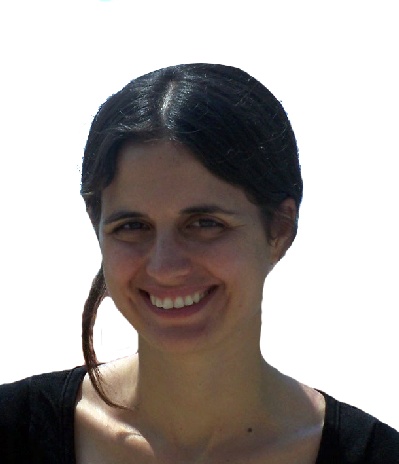}}
\noindent {\bf Christina Fragouli} is an assistant professor in the School of Computer and Communication Sciences, EPFL, Switzerland. She received the B.S. degree in Electrical Engineering from the National Technical University of Athens, Greece, in 1996, and the M.Sc. and Ph.D. degrees in Electrical Engineering from the University of California, Los Angeles, in 1998 and 2000, respectively. She has worked at the Information Sciences Center, AT\&T Labs, and the National University of Athens. She has also visited Bell Labs and DIMACS, Rutgers University. Her research interests include network coding, network information flow theory and algorithms, and connections between communications and computer science. She received the ERC Starting Grant from the European Research Council in 2009.

\parpic{\includegraphics[width=1in,clip,keepaspectratio]{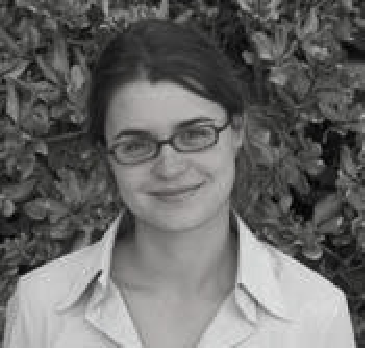}}
\noindent {\bf Athina Markopoulou} (SM'98, M'02) is an assistant professor in the EECS Department at the University of California, Irvine. She received the Diploma degree in Electrical and Computer Engineering from the National Technical University of Athens, Greece, in 1996, and the M.S. and Ph.D. degrees in Electrical Engineering from Stanford University in 1998 and 2003, respectively. She has been a postdoctoral fellow at Sprint Labs and at Stanford University, and a member of the technical staff at Arastra Inc.. Her research interests include network coding, network measurements and security, media streaming and online social networks. She received the NSF CAREER award in 2008.


\end{document}